\newtheorem{theorem}{Theorem}[section]
\newtheorem{definition}[theorem]{Definition}
\newtheorem{proposition}[theorem]{Proposition}
\newtheorem{remark}[theorem]{Remark}
\newtheorem{lemma}[theorem]{Lemma}
\newtheorem{corollary}[theorem]{Corollary}
\renewcommand{\leq}{\leqslant}
\renewcommand{\geq}{\geqslant}
\newcommand{\R}{\mathbb{R}}
\newcommand{\C}{\mathbb{C}}
\newcommand{\E}{\mathbb{E}}
\renewcommand{\P}{\mathbb{P}}
\newcommand{\U}{\mathcal{U}}
\newcommand{\D}{\mathcal{D}}
\renewcommand{\S}{\mathcal{S}}
\DeclareMathOperator{\trace}{Tr}
\DeclareMathOperator{\id}{id}
\DeclareMathOperator{\Wg}{Wg}
\DeclareMathOperator{\Mob}{Mob}
\DeclareMathOperator*{\Ex}{\mathbb{E}}
\renewcommand{\phi}{\varphi}
\renewcommand{\epsilon}{\varepsilon}
\def\be{\begin{eqnarray}}
\def\ee{\end{eqnarray}}
\def\bee{\begin{equation}}
\def\eee{\end{equation}} 
\begin{document}

\date{\today}

\title[Additivity property and conjugate pairs of quantum channels]{Asymptotically well-behaved input states do not violate additivity for conjugate pairs of random quantum channels}

\author{Motohisa Fukuda}
\address{Zentrum Mathematik, M5,
Technische Universit\"{a}t M\"{u}nchen,
Boltzmannstrasse 3,
85748 Garching, Germany}
\email{m.fukuda@tum.de}
\author{Ion Nechita}
\address{CNRS, Laboratoire de Physique Th\'eorique, IRSAMC, Universit\'e de Toulouse, UPS, 31062 Toulouse, France}
\email{nechita@irsamc.ups-tlse.fr}
\subjclass[2000]{Primary 15A52; Secondary 94A17, 94A40} 
\keywords{Random matrices, Weingarten calculus, Quantum information theory, Multipartite Entanglement}

\begin{abstract}
It is now well-known that, with high probability, the additivity of  minimum output entropy does not hold
for a pair of a random quantum channel and its complex conjugate. 
We investigate asymptotic behavior of output states of $r$-tensor powers of such pairs, as the dimension of inputs grows. 
We compute the limit output states for any sequence of well-behaved inputs,
which consist of a large class of input states having a nice set of parameters.  
Then, we show that among these input states tensor products of Bell states give asymptotically the least output entropy, 
giving positive mathematical evidence towards additivity of above pairs of channels. 
\end{abstract}
 
\maketitle

\section{Introduction}  
In this paper, we investigate the asymptotic limits of output states of $(\Phi \otimes \bar \Phi)^{\otimes r}$,
where $\Phi$ is a random quantum channel whose input space grows. 
This kind of pair of a quantum channel and its complex conjugate, $\Phi \otimes \bar \Phi$, which we shall call \emph{conjugate pair}, has been known to violate the additivity of minimum output entropy in the following sense: 
\[
S_{\min} (\Phi \otimes \bar \Phi) < S_{\min} (\Phi) + S_{\min}(\bar \Phi) 
\]
This is generically true when the dimensions of concerned spaces are large in a certain regime. 
Here, the minimum output entropy $S_{\min}(\cdot)$ is defined as 
$S_{\min} (\Phi) = \min_{\rho} S(\Phi(\rho))$,
where $S(\cdot)$ is the von Neumann entropy and $\rho$ ranges over all the inputs. 
This violation of additivity was shown first by Hastings \cite{hastings} for random unitary channels (see also \cite{fukuda-king-moser}) and then later the violation was proven for general quantum channels
\cite{brandao-horodecki,fukuda-king,aubrun-szarek-werner}. 
The additivity question of minimum output entropy was raised by King and Ruskai in \cite{king-ruskai} 
for any pair of quantum channels, and 
it attracted more attention when Shor proved its equivalence to additivity question of Holevo capacity in \cite{shor}.   

One of the interesting facts which yield this violation is that 
any conjugate pair $\Phi \otimes \bar \Phi$ has an output with a rather large eigenvalue
for a Bell-state input. 
This phenomenon was pointed out first in \cite{hayden-winter} to disprove 
the additivity of minimal output $p$-Renyi entropy for $p>1$ (see also \cite{cn2,bcn}), and
the precise limit eigenvalue distribution was calculated in \cite{cn1,cn3},  using random matrix theory. 
A search for the optimal input for this conjugate pair was carried out in \cite{cfn1,cfn2} and 
it was shown that  among some large classes of input states, a Bell state asymptotically gives the least  output entropy through the random conjugate pair $\Phi \otimes \bar \Phi$.
This of course does not imply that 
a Bell state actually gives the least output entropy but
it gives solid mathematical evidence for the physical intuition that
a Bell state is the optimal input for small output entropy.

In this paper, we ask the same kind of question: 
``what is the optimal input for  $\Psi^{\otimes r}$?'', where $\Psi =\Phi \otimes \bar \Phi$.
This question is related to the additivity question for $r$-tensor power of the conjugate pair:
\[
S_{\min} (\Psi^{\otimes r}) \stackrel{?}{=} r S_{\min} (\Psi)
\]
which was positively conjectured in \cite{hastings};
the paper argues an intuition that for low output entropy 
only bipartite entanglement between $\Phi$ and $\bar\Phi$ is useful while
multi-partite entanglement is not. 
We enforce this intuition mathematically by showing that
among large classes of inputs products of Bell states asymptotically give the least output entropy
where bipartite spaces for these Bell states make pairs of $\Phi$ and $\bar\Phi$. 
 
The novelty of our results consists in the fact that we are considering arbitrary tensor powers of $\Psi$: $\Psi^{\otimes r}$, whereas previous work \cite{hayden-winter, hastings, cn1, cn3, cfn1,cfn2}  dealt with the case $r=1$.
Although some weak form of additivity for $\Phi^{\otimes r}$, where $\Phi$ is the random quantum channel,
was shown in \cite{montanaro}, our results are different in the following sense. 
We specify the limiting output matrix of $\Psi^{\otimes r}$ for any well-behaved inputs
which have a set of stable parameters as the dimension of system grows,
in order to discuss on optimal inputs. 
The main result of the paper can be stated informally as follows: 
\begin{theorem}
Let $\Phi_n$ be sequence of random quantum channels defined by random Haar isometries and put $\Psi_n = \Phi_n \otimes \bar \Phi_n$. For any $r \geq 1$, any sequence of ``well-behaved'' input states for $\Psi_n^{\otimes r}$ yields an asymptotical output entropy larger than products of $r$ Bell states. In other words, ``well-behaved'' inputs  can not violate the additivity relation for $\Psi_n$. 
\end{theorem}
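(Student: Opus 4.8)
The plan is to follow the by now standard Weingarten-calculus route for random quantum channels, in three stages; the last one carries the real difficulty.

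\emph{Stage 1: reduction and set-up.} Since the von Neumann entropy is concave, $S_{\min}$ is attained on pure states, so it suffices to bound $S(\Psi_n^{\otimes r}(\ketbra{\psi_n}{\psi_n}))$ for pure inputs $\ket{\psi_n}$. Regrouping tensor legs, $\Psi_n^{\otimes r}\isom\Phi_n^{\otimes r}\otimes\bar\Phi_n^{\otimes r}$, so $\ket{\psi_n}$ lives in $(\C^{k_n})^{\otimes r}\otimes(\C^{k_n})^{\otimes r}$, the first block feeding the copies of $\Phi_n$ and the second the copies of $\bar\Phi_n$. Writing $\Phi_n(\cdot)=\trace_{\C^{d_n}}(V_n\cdot V_n^*)$ for a Haar isometry $V_n$, one expands the output as a word in the entries of $V_n$ and $\bar V_n$. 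The hypothesis ``well-behaved'' is exactly the requirement that the invariants of $\ket{\psi_n}$ surviving this computation — the traces $\trace[\rho_n^{\otimes m}\,\sigma]$, suitably rescaled by powers of $k_n$, where $\rho_n$ is the reduced state of $\ketbra{\psi_n}{\psi_n}$ on the $\Phi_n$-legs and $\sigma$ ranges over permutations of the $rm$ copies — converge to a fixed family of parameters as $n\to\infty$.

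\emph{Stage 2: the limiting output.} For each $m$ one computes $\E\,\trace[(\Psi_n^{\otimes r}(\ketbra{\psi_n}{\psi_n}))^m]$ with the Weingarten formula: it is a sum over a pair of permutations of a large symmetric group, each summand a product of a Weingarten weight, of dimension factors that are powers of $n$, and of the Stage 1 invariants. Extracting the leading order in $n$ — governed by the geodesic (non-crossing) condition relating the two permutations — gives an explicit deterministic limit of every moment. Together with concentration of measure on the unitary group (L\'evy's lemma and a variance estimate) this shows that the output of $\Psi_n^{\otimes r}$, suitably normalized, converges almost surely in moments, and hence that the entropy deficit $E_n:=\log(\text{output dimension})-S(\Psi_n^{\otimes r}(\ketbra{\psi_n}{\psi_n}))$ converges to an explicit functional $E$ of the Stage 1 parameters and of the limiting dimension ratio.

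\emph{Stage 3: optimization.} It then remains to show that $E$ is maximal — equivalently, the output entropy minimal — at the parameters coming from $\Bell^{\otimes r}$, where the $i$-th Bell pair links the $i$-th $\Phi_n$-leg to the $i$-th $\bar\Phi_n$-leg. For that input (with this pairing) the channel factorizes, $\Psi_n^{\otimes r}(\Bell^{\otimes r})=(\Psi_n(\Bell))^{\otimes r}$, so its deficit is exactly $r$ times the single-copy Bell deficit, which is the maximal value over well-behaved single-copy inputs by \cite{cn1,cn3} and \cite{cfn1,cfn2}. The natural strategy is to exploit that the limiting object is invariant under the $\mathfrak S_r$-action permuting the $r$ blocks and decomposes along the isotypic components of that action; on each component one invokes the single-copy optimality of \cite{cfn1,cfn2}, together with sub-additivity of entropy and a block-by-block majorization of a general parameter family by the Bell one, to conclude that the deficit of any well-behaved input is at most $r$ times the single-copy Bell deficit, i.e. that its asymptotic output entropy is at least $r$ times the asymptotic Bell output entropy — which is the claim.

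\emph{The main obstacle} is Stage 3. Already for $r=1$ the optimization is delicate \cite{cfn1,cfn2}, and for $r>1$ the parameter space is strictly larger: the ``mixed'' permutations $\sigma$ that link different blocks genuinely contribute to $E$, and a priori such multipartite correlations could push the output entropy below the product value. Showing that they never do — that the $\Bell^{\otimes r}$ parameters are the global optimum — is precisely the quantitative content of Hastings' heuristic \cite{hastings} that only bipartite $\Phi_n$--$\bar\Phi_n$ entanglement helps lower output entropy, and is expected to require a Schur-concavity (monotonicity) property of $E$ in the parameters together with a careful enumeration of the geodesic permutations arising in Stage 2.
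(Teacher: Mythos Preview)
Your three-stage outline matches the paper's architecture, but both the parametrization in Stage~1 and the optimization in Stage~3 diverge from what the paper actually does, and your Stage~3 is a genuine gap.

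\textbf{Stage 1: wrong invariants.} The paper's ``well-behaved'' parameters are \emph{not} rescaled traces $\trace[\rho_n^{\otimes m}\sigma]$ of the reduced state against permutations of $rm$ copies. They are the overlaps $a_\beta=\lim_n\langle\psi_n|\tilde T_\beta^{(n)}|\psi_n\rangle$ indexed by \emph{partial permutations} $\beta\in\hat\S_r$, where $\tilde T_\beta^{(n)}$ is a tensor product of normalized Bell projectors linking the $x$-th $\Phi$-leg to the $\beta(x)$-th $\bar\Phi$-leg for $x\in\mathrm{dom}(\beta)$, and identities elsewhere. There are only $|\hat\S_r|$ such parameters, independent of the moment order $m$, and they are what survives the Weingarten expansion at leading order. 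This distinction matters: the partial-permutation structure is precisely what makes the Stage~3 argument work.

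\textbf{Stage 3: different mechanism, and yours is incomplete.} The paper does not use $\mathfrak S_r$-isotypic decomposition, sub-additivity, Schur-concavity, or a reduction to the single-copy results of \cite{cfn1,cfn2}. Instead it rewrites the limiting output $Z=k^{-2r}\sum_\beta a_\beta R_\beta^{(k)}$ as a \emph{convex combination}
\[
Z=\sum_{\alpha\in\hat\S_r}p_\alpha Z_\alpha^{(k)},\qquad p_\alpha=\sum_{\beta\geq\alpha}(-1)^{|\mathrm{dom}\beta|-|\mathrm{dom}\alpha|}a_\beta,
\]
via M\"obius inversion on the poset $(\hat\S_r,\leq)$, where each $Z_\alpha^{(k)}$ is an explicit product state whose entropy is a strictly decreasing affine function of $|\mathrm{dom}(\alpha)|$. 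Ordinary concavity of von Neumann entropy then gives $H(Z)\geq\sum_\alpha p_\alpha H(Z_\alpha^{(k)})\geq\min_\alpha H(Z_\alpha^{(k)})=rH(C)$, with equality forced exactly when $p_\pi=1$ for some full permutation $\pi$, i.e.\ at a tensor product of Bell states. The only nontrivial point is that the M\"obius coefficients $p_\alpha$ are \emph{nonnegative}; this is proved by a separate spectral analysis (Section~\ref{sec:partial-permutations}) showing that the operators $Q_\alpha^{(n)}=\sum_{\beta\geq\alpha}(-1)^{|\mathrm{dom}\beta|-|\mathrm{dom}\alpha|}\tilde T_\beta^{(n)}$ have spectrum converging to $\{0,1\}$, hence are asymptotically positive.

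Your proposed route --- decompose along $\mathfrak S_r$-isotypics and bootstrap from the $r=1$ case --- is not the paper's, and you yourself flag it as the main obstacle without resolving it. It is not clear that multipartite correlations decouple in the way such a reduction would require; the paper sidesteps this entirely with the convex-combination trick. So as written your Stage~3 is a plan, not a proof, and the missing idea is the M\"obius/convexity decomposition together with the asymptotic positivity of $Q_\alpha^{(n)}$.
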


To prove the above statement, we introduce new techniques to study high tensor powers of random matrices. Not surprisingly, we have to deal with operators corresponding to the action of the symmetric group on tensor powers of vector spaces, due to the use of the Schur-Weyl duality via the Weingarten formula. 
First, we perform a detailed spectral analysis of these operators, needed to state optimality result for the von Neumann entropy. 
Secondly, in the appendix, we compute bounds for generalized traces of arbitrary matrices acting on tensor powers, in terms of the $L^1$, $L^2$ or the $L^\infty$ norms of the matrices. We believe these two technical results to be important on their own and to find applications in future work on the subject.

The paper is organized as follows. 
In Section \ref{sec:graphical-calculus}
we briefly explain the setting and the tools: random quantum channels and graphical Weingarten calculus. Then, in Section \ref{sec:partial-permutations}, 
we analyze some operators defined via permutations, 
which are used later. 
After calculating the limiting output matrix of $\Psi^{\otimes r}$ for well-behaved inputs in Section \ref{sec:main}, we conclude in Section \ref{sec:Bell-optimal} that among these,
tensor products of Bell states yield the least output entropy.
In Section \ref{sec:GHZ} we show that GHZ states and generic multi-partite inputs asymptotically give the maximally mixed outputs. 
Finally, in Appendix \ref{sec:trace-bounds}, we obtain
bounds for generalized traces of matrices, which are used to prove results in Section \ref{sec:main}.

\section{Random quantum channels and the graphical Weingarten formula}\label{sec:graphical-calculus}

This section is split into three parts, which treat, in order, random quantum channels, the algebraic Weingarten formula, and its graphical implementation.

\subsection*{Random quantum channels}
A quantum channel $\Phi:  M_n(\mathbb C) \to M_n(\mathbb C)$ is a linear, completely positive and trace preserving map \cite{nielsen-chuang}. By the Stinespring dilation theorem, such a map can be written as
\be
\Phi (X) = \trace_{\C^k}(V X V^*)
\ee 
where
\be
V : \C^n \rightarrow \C^k\otimes \C^n 
\ee  
is an isometry, $V^*V = I$. 
One can define a probability measure on the set of quantum channels starting from the Haar measure on the unitary group $\U(kn)$ in the following way. We endow the set of isometries $\C^n \rightarrow \C^k\otimes \C^n $ by truncating a unitary matrix $U$ distributed along the Haar measure and we consider the image measure on the set of quantum channels. In other words, 
\be\label{picture by U}
\Phi (X) = \trace_{\C^k} ( U (P_e \otimes X ) U^* ),
\ee 
where $P_e \in M_k(\C)$ is a rank one projector, called the state of the environment.

If one switches the roles of output and environment spaces, we obtain the \emph{complementary channel}
$\Phi^C: M_n(\mathbb C) \to M_k(\mathbb C)$
\cite{Holevo05,KMNR}:
\be
\Phi^C (X) = \trace_{\C^n} (V X V^*)
\qquad \text{or} \qquad 
\Phi^C (X) = \trace_{\C^n} (U (P_e \otimes \rho ) U^*)
\ee 
A quantum channel and its complementary channel share 
the same non-zero output eigenvalues for any rank one input state, see  \cite{Holevo05,KMNR} for details. We shall prefer working with complementary channels because we are interested in the asymptotic regime $n \to \infty$ and $k$ fixed.
 
\subsection*{Unitary integration} 
  
Our random matrices are built from Haar random unitary operators, so we are interested in computing averages with respect to this measure. The computation of averages of monomials in the entries of a random unitary matrices has been performed in \cite{collins-imrn, collins-sniady}.

\begin{theorem}
\label{thm:Wg}
 Let $n$ be a positive integer and
$i=(i_1,\ldots ,i_p)$, $i'=(i'_1,\ldots ,i'_p)$,
$j=(j_1,\ldots ,j_p)$, $j'=(j'_1,\ldots ,j'_p)$
be $p$-tuples of positive integers from $\{1, 2, \ldots, n\}$. Then
\begin{multline}
\label{bid} \int_{\U(n)}U_{i_1j_1} \cdots U_{i_pj_p}
\overline{U_{i'_1j'_1}} \cdots
\overline{U_{i'_pj'_p}}\ dU=\\
\sum_{\alpha, \beta\in \S_{p}}\delta_{i_1i'_{\alpha (1)}}\ldots
\delta_{i_p i'_{\alpha (p)}}\delta_{j_1j'_{\beta (1)}}\ldots
\delta_{j_p j'_{\beta (p)}} \Wg (n,\beta^{-1}\alpha),
\end{multline}
where the function $\Wg$ is called the Weingarten function \cite{collins-sniady}.
If $p\neq p'$ then
\begin{equation} \label{eq:Wg_diff} \int_{\U(n)}U_{i_{1}j_{1}} \cdots
U_{i_{p}j_{p}} \overline{U_{i'_{1}j'_{1}}} \cdots
\overline{U_{i'_{p'}j'_{p'}}}\ dU= 0.
\end{equation}
\end{theorem}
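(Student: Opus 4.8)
The plan is to recognise the left‑hand side of \eqref{bid} as a matrix coefficient of the orthogonal projection onto the commutant of $\U(n)^{\otimes p}$, and then to expand that projection over the permutation operators, the Weingarten function arising as a (pseudo‑)inverse Gram matrix. I would first dispose of \eqref{eq:Wg_diff}: replacing $U$ by $zU$ for a fixed unit scalar $z\in\C$ preserves the Haar measure but multiplies the monomial by $z^{p}\bar z^{p'}=z^{p-p'}$, so integrating over the phase of $z$ forces the integral to vanish unless $p=p'$.

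For the balanced case I would collect the quantities $\int_{\U(n)}U_{i_1j_1}\cdots U_{i_pj_p}\overline{U_{i'_1j'_1}}\cdots\overline{U_{i'_pj'_p}}\,dU$ into a single operator
\[
T \;:=\; \int_{\U(n)} U^{\otimes p}\otimes \overline{U}^{\otimes p}\, dU \;\in\; \mathrm{End}\big((\C^n)^{\otimes p}\big)\otimes\mathrm{End}\big((\C^n)^{\otimes p}\big),
\]
whose entry in position $\big((i,i'),(j,j')\big)$ is exactly the left‑hand side of \eqref{bid}. Invariance of the Haar measure under $U\mapsto WU$, $U\mapsto UW$, $U\mapsto U^{-1}$ and $U\mapsto UU'$ shows successively that $T$ commutes with $W^{\otimes p}\otimes\overline{W}^{\otimes p}$ for all $W\in\U(n)$, that $T^{*}=T$, and that $T^{2}=T$; hence $T$ is the orthogonal projection onto the space of invariants of the representation $W\mapsto W^{\otimes p}\otimes\overline{W}^{\otimes p}$. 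Identifying $(\C^n)^{\otimes p}\otimes\overline{(\C^n)^{\otimes p}}$ with $\mathrm{End}\big((\C^n)^{\otimes p}\big)$ so that this representation becomes conjugation $X\mapsto W^{\otimes p}X(W^{\otimes p})^{*}$, the invariant space is the commutant $\mathrm{End}_{\U(n)}\big((\C^n)^{\otimes p}\big)$, which by Schur--Weyl duality is linearly spanned by the permutation operators $\rho(\sigma)$, $\sigma\in\S_p$, acting on the tensor legs.

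Now I would write this projection explicitly over the spanning family $\{\rho(\sigma)\}_{\sigma\in\S_p}$. Its Gram matrix in the Hilbert--Schmidt inner product is $G_{\sigma\tau}=\trace\big(\rho(\sigma)^{*}\rho(\tau)\big)=\trace\big(\rho(\sigma^{-1}\tau)\big)=n^{\,c(\sigma^{-1}\tau)}$, where $c(\cdot)$ counts cycles, so the orthogonal projection onto their span sends $X$ to $\sum_{\sigma,\tau\in\S_p}(G^{+})_{\sigma\tau}\,\trace\big(\rho(\tau)^{*}X\big)\,\rho(\sigma)$, with $G^{+}$ the Moore--Penrose pseudo‑inverse. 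Since $G$ commutes with both the left and the right regular representations of $\S_p$, so does $G^{+}$, which therefore has the form $(G^{+})_{\sigma\tau}=\Wg(n,\sigma^{-1}\tau)$ for a class function $\Wg(n,\cdot)$ on $\S_p$ — this is the definition of the Weingarten function. It then remains to substitute the elementary entrywise formula for $\rho(\sigma)$ (a product of $p$ Kronecker deltas matching leg $k$ to leg $\sigma^{\pm 1}(k)$) into the expansion of $T$ and bookkeep the four index tuples: writing $\alpha=\sigma^{-1}$, $\beta=\tau^{-1}$ one gets the deltas $\delta_{i_k i'_{\alpha(k)}}$, $\delta_{j_k j'_{\beta(k)}}$ and the coefficient $\Wg(n,\sigma^{-1}\tau)=\Wg(n,\beta^{-1}\alpha)$, which is precisely the right‑hand side of \eqref{bid}.

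The main obstacle is the regime $n<p$: there the operators $\rho(\sigma)$ are linearly dependent, $G$ is singular, and one must justify that passing to the pseudo‑inverse still produces a well‑defined class function $\Wg(n,\cdot)$ and that the resulting identity is independent of the choices made (for $n\ge p$ this is automatic, $G$ being invertible, and $\Wg(n,\cdot)=G^{-1}$). A secondary, purely clerical point is to pin down the tensor‑leg conventions for $\rho(\sigma)$ so that the two permutations and the argument $\beta^{-1}\alpha$ of $\Wg$ emerge with the correct orientation; this is where essentially all of the routine computation sits.
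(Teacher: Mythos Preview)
The paper does not actually prove this theorem: it is stated as a known result with a citation to \cite{collins-imrn,collins-sniady}, and the text moves directly to the definition of the Weingarten function. Your sketch is therefore not being compared against a proof in the paper itself but against the standard literature proof, and indeed what you outline---the phase argument for $p\neq p'$, recognising the averaged operator as the orthogonal projection onto $\U(n)$-invariants, identifying that space with the commutant via Schur--Weyl, and expanding the projection over the $\rho(\sigma)$ with the (pseudo-)inverse Gram matrix defining $\Wg$---is essentially the argument of Collins and Collins--\'Sniady. Your identification of the $n<p$ regime as the only genuine subtlety is accurate, and your remark about the clerical orientation check for $\beta^{-1}\alpha$ is well placed; there is no gap in the approach.
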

  
  Let us recall the definition of the unitary Weingarten function.
\begin{definition}
The unitary Weingarten function 
$\Wg(n,\sigma)$
is a function of a dimension parameter $n$ and of a permutation $\sigma$
in the symmetric group $\S_p$. 
It is the inverse of the function $\sigma \mapsto n^{\#  \sigma}$ under the convolution for the symmetric group algebra ($\# \sigma$ denotes the number of cycles of the permutation $\sigma$):
\begin{equation}
\forall \sigma,\pi \in \S_p, \quad \sum_{\tau \in \S_p} \Wg(n,\sigma^{-1}\tau)n^{\#(\tau^{-1}\pi)} = \delta_{\sigma, \pi}.
\end{equation}
It has the following asymptotics 
\begin{equation}\label{eq:Wg-asympt}
\Wg(n,\sigma) = n^{-(p + |\sigma|)} (\Mob(\sigma) + O(n^{-2})),
\end{equation}
where the M\"{o}bius function is multiplicative on the cycles of $\sigma$ and its value on a $r$-cycle is
$$(-1)^{r-1} \mathrm{Cat}_{r-1},$$
where $\mathrm{Cat}$ are the Catalan numbers and $|\sigma |$ is the \emph{length} of $\sigma$, i.e. the minimal number of transpositions that multiply to $\sigma$.
\end{definition}

Finally, note that the function $|\sigma| = p-\#\sigma$ induces a distance on $\S_p$, $d(\sigma, \pi) = |\sigma^{-1} \pi|$.

\subsection*{Graphical Weingarten calculus}

We recall now the graphical Weingarten method for computing unitary integrals, introduced in \cite{cn1}. We shall only give the main ideas, referring the reader to \cite{cn1} for all the details. Recent work making heavy use of this technique is \cite{cn3,cn-entropy,cnz,cfn1,aubrun-nechita,cfn2}.

The Weingarten graphical calculus builds up on the \emph{tensor diagrams} introduced by physicists and adds to it the ability to perform averages over unitary elements. In the graphical formalism, tensors (vectors, linear forms, matrices, etc) are represented by \emph{boxes}. To each box, one attaches labels of different shapes, corresponding to vector spaces. The labels can be empty (white) or filled (black), corresponding to spaces or their duals: a $(p,q)$-tensor will be represented by a box with $p$ white labels and $q$ black labels attached. 

Besides boxes, our diagrams contain \emph{wires}, which connect the labels attached to boxes. Each wire corresponds to a tensor contraction between a vector space $V$ and its dual $V^*$: $V \times V^* \to \mathbb C$. A \emph{diagram} is a collection of such boxes and wires and corresponds to an element in a tensor product space, see Figure \ref{fig:A}.

\begin{figure}[htbp] 
\includegraphics{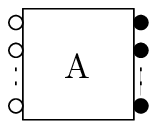} \qquad\qquad\qquad \includegraphics{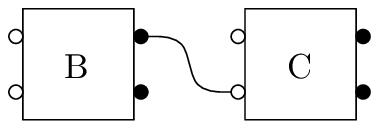}
\caption{Diagram for a matrix acting on a tensor product space and for a tensor contraction between matrices $B$ and $C$.} 
\label{fig:A}
\end{figure}

Let us now describe briefly how one computes expectation values of such diagrams containing random Haar unitary boxes. The idea in \cite{cn1} was to implement in the graphical formalism the Weingarten formula in Theorem \ref{thm:Wg}. Each pair of permutations $(\alpha,\beta)$ in  (\ref{bid}) will be used to eliminate $U$ and $\overline U$ boxes and wires will be added between the white, resp. black, labels of the box $U$ with index $i$ and the white, resp. black, labels of the box $\bar U$ with index $\alpha(i)$, resp. $\beta(i)$. In this way, for each pair of permutations, one obtains a new diagram, called a \emph{removal} of the original diagram. The graphical Weingarten formula is described in the following theorem \cite{cn1}.

\begin{theorem}\label{thm:graphical-Wg}
If $\mathcal D$ is a diagram containing boxes $U, \overline U$ corresponding to a Haar-distributed random unitary matrix $U \in \mathcal U(n)$, the expectation value of $\mathcal D$ with respect to $U$ can be decomposed as a sum of removal diagrams $\D_{\alpha, \beta}$, weighted by Weingarten functions:
\[\E_U(\D)=\sum_{\alpha, \beta} \D_{\alpha, \beta} \Wg (n, \alpha^{-1}\beta).\]
\end{theorem}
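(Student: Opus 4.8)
The plan is to deduce the graphical formula from the algebraic Weingarten formula of Theorem \ref{thm:Wg} by expanding every box of $\D$ in a fixed orthonormal basis. First I would choose bases on all the vector spaces attached to the decorations of the boxes and write the tensor (in particular the scalar, when $\D$ is closed) represented by $\D$ as a sum, over all assignments of basis indices to the wires, of a product of matrix entries of the boxes. By hypothesis $\D$ contains $p$ boxes labelled $U$ and, say, $p'$ boxes labelled $\overline U$; if $p \neq p'$ then taking expectations termwise and using (\ref{eq:Wg_diff}) gives $\E_U(\D) = 0$, consistent with the (empty) right-hand side, so we may assume $p' = p$. Then each index assignment contributes a monomial $\prod_{a=1}^p U_{i_a j_a} \prod_{b=1}^p \overline{U_{i'_b j'_b}}$ multiplied by the product of the entries of the deterministic boxes, where $i_a,j_a$ are the indices carried by the wires incident to the white and black decorations of the $a$-th $U$-box, and similarly for $\overline U$.

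Next I would push the expectation inside the finite sum over index assignments and apply Theorem \ref{thm:Wg}, which replaces $\E_U\bigl[\prod_a U_{i_a j_a}\prod_b \overline{U_{i'_b j'_b}}\bigr]$ by $\sum_{\alpha,\beta \in \S_p}\Wg(n,\beta^{-1}\alpha)\prod_{a=1}^p \delta_{i_a i'_{\alpha(a)}}\delta_{j_a j'_{\beta(a)}}$. Interchanging the two finite sums, for a fixed pair $(\alpha,\beta)$ the Kronecker deltas have a direct graphical reading: $\delta_{i_a i'_{\alpha(a)}}$ forces the index on the white decoration of the $a$-th $U$-box to equal the index on the white decoration of the $\alpha(a)$-th $\overline U$-box, i.e.\ it fuses those two wire-ends, and $\delta_{j_a j'_{\beta(a)}}$ does the same for the black decorations via $\beta$. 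Deleting the $U$ and $\overline U$ boxes and performing exactly these fusions is the definition of the removal diagram $\D_{\alpha,\beta}$, while the sum over all indices not pinned by the deltas — those on the external legs and on the wires not touching a $U$ or $\overline U$ box — is precisely the tensor contraction that $\D_{\alpha,\beta}$ encodes. This yields $\E_U(\D) = \sum_{\alpha,\beta}\D_{\alpha,\beta}\,\Wg(n,\beta^{-1}\alpha)$, and since $\Wg(n,\cdot)$ is invariant under inversion (being the convolution inverse of the inversion-invariant function $\sigma \mapsto n^{\#\sigma}$) one has $\Wg(n,\beta^{-1}\alpha) = \Wg(n,(\beta^{-1}\alpha)^{-1}) = \Wg(n,\alpha^{-1}\beta)$, which is the stated form.

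The work here is entirely bookkeeping rather than estimation, and that bookkeeping is the only place any care is needed: one must make sure that a wire of $\D$ joining two $U$-boxes, or a $U$-box to itself, still contributes one independent free index at each of its two endpoints so that the substitution of deltas is well-defined, that the white/black (space versus dual) matching is respected — which it is automatically, since a decoration feeding into $U_{ij}$ as a row index versus as a column index is exactly what distinguishes white from black — and that the external legs of $\D$ pass unchanged into each $\D_{\alpha,\beta}$. I do not anticipate any genuine obstacle: the theorem is a faithful diagrammatic re-encoding of Theorem \ref{thm:Wg}, and the proof amounts to making the dictionary ``Kronecker delta $=$ wire'' precise.
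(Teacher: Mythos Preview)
Your proposal is correct and follows exactly the approach the paper indicates: the paper does not give its own proof of this theorem but refers to \cite{cn1}, summarizing the idea as ``implement in the graphical formalism the Weingarten formula in Theorem \ref{thm:Wg}'' by using the permutations $(\alpha,\beta)$ to replace $U,\overline U$ boxes with wires between their matching decorations. Your expansion in a basis, termwise application of Theorem \ref{thm:Wg}, and reinterpretation of the Kronecker deltas as the new wires of $\D_{\alpha,\beta}$ is precisely this dictionary made explicit.
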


\section{Partial permutations and their associated operators}\label{sec:partial-permutations}

In this section we define \emph{partial permutation matrices}, central objects which will appear several times in the paper. 
Then, we construct several matrices out of partial permutation matrices,
which are used later in this paper. 
Readers may just look over this section to get familiar with newly defined matrices and
come back to it whenever needed.  

For a permutation $\sigma \in \mathcal S_r$, 
we define $P^\otimes_\sigma \in M_{n^r}(\mathbb C)$ the \emph{tensor permutation matrix} on simple tensors by
\be\label{def:permutation-matrix} 
\forall x_i \in \mathbb C^n, \qquad P^\otimes_\sigma(x_1 \otimes \cdots \otimes x_r) = x_{\sigma^{-1}(1)} \otimes \cdots \otimes  x_{\sigma^{-1}(r)}.
\ee
These operators correspond to the usual action $\S_r \curvearrowright (\C^n)^{\otimes r}$. Note that we use the tensor superscript to distinguish these operators from the usual permutation matrices $P_\sigma$ which act as $P_\sigma e_i = e_{\sigma^{-1}(i)}$ on a basis $\{e_1, \ldots, e_n\}$ of $\C^n$.

A \emph{partial permutation} on $\{1,2,\ldots,r\}$ is defined to be an injective map
$$\alpha: \mathrm{dom}(\alpha) \to \{1,2,\ldots,r\},$$
where $\mathrm{dom}(\alpha)$ stands for the \emph{domain} of $\alpha$. The set of all partial permutations of $ \{1,2,\ldots,r\}$ is denoted by $\hat \S_r$. We denote by $\emptyset$ the partial permutation with empty domain. The cardinality of $\hat \S_r$ is
\be\label{formula:card-pp}
|\hat \S_r| = \sum_{k=0}^r \left[\binom{r}{k}\right]^2k !
\ee
We endow the set $\hat \S_r$ with the ``map extension'' partial order: $\alpha \leq \beta$ iff 
$$\mathrm{dom}(\alpha) \subseteq \mathrm{dom}(\beta) \quad \text{ and } \quad \alpha(i) = \beta(i) \, \forall i \in \mathrm{dom}(\alpha).$$
For this partial order, the empty partial permutation $\emptyset$ is the unique minimal element and every ``full'' permutation $\sigma \in \S_r \subset \hat \S_r$ is a maximal element. 

Suppose we have a $2r$-partite space $(\C^n)^{\otimes r} \otimes (\C^n)^{\otimes r}$ and 
name the spaces as follows:
$$(\C^n)^{\otimes r} \otimes (\C^n)^{\otimes r} = \C^n_{[1,T]} \otimes \C^n_{[2,T]} \otimes \cdots \otimes  \C^n_{[r,T]} \otimes \C^n_{[1,B]} \otimes \C^n_{[2,B]}\otimes \cdots  \otimes \C^n_{[r,B]},$$
where $T$ stands for ``top'' and $B$ stands for ``bottom'' (the spaces should be imagined stacked vertically, see Figure \ref{fig:partial-perm} for the case $r=2$).
For a partial permutation $\alpha \in \hat S_r$ define
\be\label{operator:T}
T_\alpha^{(n)} = \bigotimes _{x \in {\rm dom} (\alpha)} B_{x,\alpha (x)}^{(n)} \otimes I 
\ee
where $B^{(n)}_{x,y} \in \C^n_{[x,T]} \otimes \C^n_{[y,B]}$ is an un-normalized Bell state:
\be
B_{x,y}^{(n)} = \sum_{i,j=1}^n e_i e_j^* \otimes f_if_j^*  
\ee
where $\{e_i\}$ and $\{f_j\}$ are the canonical bases for the spaces with labels $[x,T]$ and $[y,B]$, respectively. In other words,
\begin{enumerate}[i)]
\item If $x \in {\rm dom} (\alpha)$, then $T_\alpha^{(n)}$ acts on $\C^n_{[x, T]} \otimes \C^n_{[\alpha(x),B]}$ as the un-normalized Bell state;
\item If $x \not \in {\rm dom} (\alpha)$, then $T_\alpha^{(n)}$ acts on $\C^n_{[x, T]}$ as the identity;
\item If $y \neq \alpha(x)$ for any $x  \in {\rm dom} (\alpha)$, then $T_\alpha^{(n)}$ acts on $\C^n_{[y, B]}$ as the identity.
\end{enumerate} 

In Figure \ref{fig:partial-perm}, we use the graphical notation from Section \ref{sec:graphical-calculus} to represent the seven different operators $T_\alpha^{(n)}$ in the case $r=2$ (note that $7=1+4+2$ as in \eqref{formula:card-pp}). 
 
\begin{figure}[htbp]
\includegraphics{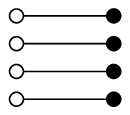}\qquad
\includegraphics{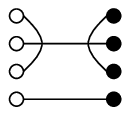}\qquad
\includegraphics{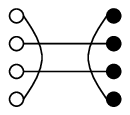}\qquad
\includegraphics{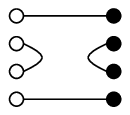}\qquad
\includegraphics{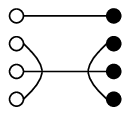}\qquad
\includegraphics{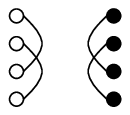}\qquad
\includegraphics{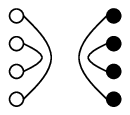}
\caption{Diagrams for the operators $T_\alpha^{(n)}$ in the case $r=2$} 
\label{fig:partial-perm}
\end{figure}
 
We also define the normalized versions of the above operators:
\be\label{operator_rescaled}
\tilde B_{x,y}^{(n)} = \frac{1}{n} B_{x,y}^{(n)}; \qquad 
\tilde T_\alpha^{(n)} = \bigotimes _{x \in {\rm dom} (\alpha)} \tilde B_{x,\alpha (x)}^{(n)} \otimes I,
\ee 
which are projections. 

In the rest of this section, we compute (asymptotically) the spectrum of an operator appearing in the study of the output states of Theorem \ref{thm:main}. The main result here will be used in Section \ref{sec:Bell-optimal} to analyze the optimality of input states for random quantum channels.

For a partial permutation $\alpha \in \hat{\mathcal S}_r$, define the operator
\begin{equation}\label{eq:def-Q-alpha}
Q_\alpha^{(n)} = \sum_{\beta \geq \alpha}\tilde T_{\beta}^{(n)}  (-1)^{|{\rm dom}\beta|-|{\rm dom}\alpha|} 
\end{equation}
Our goal (see Proposition \ref{prop:Z-conv-comb}) is to show that the matrices $Q_\alpha^{(n)} $ are asymptotically positive (i.e. their eigenvalues converge to non-negative numbers) when $n \to \infty$. We shall do more than this, by showing that their asymptotic spectrum is the set $\{0,1\}$. First, note that any partial permutation $\beta$ satisfying $\beta \geq \alpha$ can be written as a ``direct sum'' $\beta = \alpha \oplus \gamma$, where $\gamma$ can be any partial permutations such that the domains and the images of $\alpha$ and $\gamma$ are disjoint: $\mathrm{dom}(\alpha) \cap \mathrm{dom}(\gamma) = \emptyset$ and  $\alpha(\mathrm{dom}(\alpha)) \cap \gamma(\mathrm{dom}(\gamma)) = \emptyset$. Using this observation, one can write
\begin{equation}\label{eq:decomp-Q-alpha}
Q_\alpha^{(n)} = \left[ \bigotimes _{x \in {\rm dom} (\alpha)} \tilde B^{(n)}_{x,\alpha (x)} \right] \otimes \hat Q_\emptyset^{(n)},
\end{equation}
where $\hat Q_\emptyset^{(n)}$ is a copy of $ Q_\emptyset^{(n)}$ with $r$ replaced by $\hat r = r - |\mathrm{dom}(\alpha)|$. From this decomposition, we see that the operator $Q_\emptyset^{(n)}$ plays a special role, as it appears in every other $Q_\alpha^{(n)}$. Moreover, it is clear that $Q_\alpha^{(n)}$ is asymptotically positive iff $Q_\emptyset^{(n)}$ is. 

Since we expect the operator $Q_\emptyset^{(n)} \in M_{n^{2r}}(\mathbb C)$ to have eigenvalues with high multiplicity, we study it as a left multiplication operator on an algebra $\mathcal A_n \ni 1$. Besides the operators $T_\alpha^{(n)}$ (which connect ``top'' spaces with ``bottom'' spaces), the algebra $\mathcal A_n$ contains all tensor permutation operators acting separately on the top and the bottom spaces. Formally, the $\mathcal A_n$ is defined as the algebra generated inside $M_{n^{2r}}(\mathbb C)$ by the operators $T_\alpha^{(n)}$ and the tensor permutation operators of the form $P^\otimes_{\pi_T} \otimes P^\otimes_{\pi_B}$, for any permutations $\pi_{T,B} \in \mathcal S_r$:
$$\mathcal A_n = \mathrm{alg}\left( \{T_\alpha^{(n)}\}_{\alpha \in \hat{\mathcal S}_r} \cup \{P^\otimes_{\pi_T} \otimes P^\otimes_{\pi_B}\}_{\pi_{T,B} \in \mathcal S_r}\right).$$

Up to normalization constants (depending on $n$), the algebra $\mathcal A_n$  has a combinatorial structure, induced by the wire contractions. It can be easily seen that the product $T_\alpha^{(n)}T_\beta^{(n)}$ can be written as $cT_\gamma^{(n)} P^\otimes_{\pi_T} \otimes P^\otimes_{\pi_B}$ for some partial permutation $\gamma \in  \hat{\mathcal S}_r$, permutations $\pi_T,\pi_B \in \mathcal S_r$ and a constant $c>0$ (actually, $c$ is a positive power of $n$). We introduce next the following subsets:
\begin{itemize} 
\item $\mathcal X_1 = \{T_\alpha^{(n)}: \alpha \in \hat S_r\}$
\item $\mathcal X_2 = \{P^\otimes_{\pi_T} \otimes P^\otimes_{\pi_B}: 
\pi_{T},\pi_{B} \in S_r \}\setminus \mathcal X_1$: permutation matrices not of the above type;
\item $\mathcal X_3 = \{ T_\alpha^{(n)}[ P^\otimes_{\pi_T} \otimes P^\otimes_{\pi_B}]: \alpha \in \hat S_r
\text{ and } \pi_{T}, \pi_{B}\in S_r\}\setminus(\mathcal X_1 \sqcup \mathcal X_2)$: products  not of the above types.
\end{itemize}
The set $\mathcal X = \mathcal X_1 \sqcup \mathcal X_2 \sqcup \mathcal X_3$ admits the following elegant description. 
\begin{lemma}\label{lemma:X}
The set $\mathcal X$ is in bijection with the set of permutations of $2r$ objects. More precisely, the map
\begin{align*}
 \mathcal{S}_{2r} & \to \mathcal X\\
\sigma & \mapsto (\mathrm{id} \otimes \mathrm{t})P^\otimes_\sigma
\end{align*}
is a bijection, where $\mathrm{t}$ denotes the transposition operator on $M_{n^r}(\mathbb C)$ acting on the ``bottom'' subspaces.
\end{lemma}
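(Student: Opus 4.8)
Write $\mathrm{PT} = \mathrm{id} \otimes \mathrm{t}$ for the partial transposition on the ``bottom'' block appearing in the statement. The plan is to exploit that $\mathrm{PT}$ is an \emph{involutive} linear automorphism of $M_{n^{2r}}(\mathbb C)$, so the asserted bijection will follow once I establish the two inclusions
\[
\mathrm{PT}(P^\otimes_\sigma) \in \mathcal X \ \text{ for every } \sigma \in \S_{2r}, \qquad \mathrm{PT}(\xi) \in \{P^\otimes_\sigma : \sigma\in\S_{2r}\} \ \text{ for every } \xi \in \mathcal X,
\]
together with the injectivity of $\sigma \mapsto \mathrm{PT}(P^\otimes_\sigma)$. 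I will work throughout in the graphical calculus of Section~\ref{sec:graphical-calculus}: $P^\otimes_\sigma$ is the diagram with $2r$ input and $2r$ output legs, one pair for each of the spaces $\C^n_{[1,T]},\dots,\C^n_{[r,T]},\C^n_{[1,B]},\dots,\C^n_{[r,B]}$, in which the input leg of the $i$-th space is wired to the output leg of the $\sigma(i)$-th space; applying $\mathrm{t}$ to the bottom block just interchanges, for each bottom space, its input leg with its output leg. The computation underpinning the whole lemma is that this interchange turns the flip on $\C^n_{[x,T]}\otimes\C^n_{[y,B]}$ into the un-normalized Bell state $B^{(n)}_{x,y}$ and conversely; in particular $\mathrm{PT}(P^\otimes_{\tau_\alpha}) = T^{(n)}_\alpha$ for the involution $\tau_\alpha = \prod_{x\in\mathrm{dom}(\alpha)}(x_T\,\alpha(x)_B) \in \S_{2r}$ (a product of disjoint transpositions).

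For the first inclusion I would bend the $r$ bottom legs of $P^\otimes_\sigma$ and sort the $2r$ wires of $\sigma$ by the type of spaces at their endpoints: a top--top wire is untouched, a bottom--bottom wire is bent at both ends, and a mixed wire is bent exactly once and hence becomes a ``cup'' or a ``cap''. Tracing the bent wires through the diagram, the cups and caps --- which all join one top space to one bottom space --- reassemble, together with the surviving straight wires, into an operator $T^{(n)}_\alpha[P^\otimes_{\pi_T}\otimes P^\otimes_{\pi_B}]$, where $\alpha\in\hat\S_r$ records which top spaces become Bell-paired with which bottom spaces and $\pi_T,\pi_B\in\S_r$ record the residual top-to-top and bottom-to-bottom wirings (after the cups and caps have been pushed through the permutation boxes). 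The scalar in front is always $1$, because bending boundary legs of a loop-free diagram only re-routes wires between boundary points and never closes a loop --- so, in contrast with a product $T^{(n)}_\alpha T^{(n)}_\beta$, no power of $n$ appears. The second inclusion is established by the same argument read backwards: in the diagram of an arbitrary $\xi = T^{(n)}_\alpha[P^\otimes_{\pi_T}\otimes P^\otimes_{\pi_B}] \in \mathcal X$, every cup and every cap (the latter possibly routed through the permutation boxes) joins exactly one top leg to one bottom leg, so $\mathrm{PT}$ straightens all of them simultaneously and leaves a loop-free diagram of straight wires, i.e.\ a tensor permutation $P^\otimes_\sigma$.

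Injectivity will then be immediate: the operators $\{P^\otimes_\tau : \tau\in\S_{2r}\}$ are linearly independent in $M_{n^{2r}}(\mathbb C)$ once $n\geq 2r$ (in particular in the asymptotic regime relevant to this paper), and $\mathrm{PT}$ is a linear bijection, so the $\mathrm{PT}(P^\otimes_\sigma)$ are linearly independent, in particular pairwise distinct; with the two inclusions this yields the bijection $\S_{2r}\isom\mathcal X$ and, incidentally, $|\mathcal X| = (2r)!$. (Note that $\mathcal X$ is \emph{a priori} over-parametrized by the $\sum_k\binom{r}{k}^2 k!\cdot(r!)^2$ triples $(\alpha,\pi_T,\pi_B)$, which over-counts in general --- e.g.\ $28>24$ for $r=2$, since a cup or a cap may be slid across a permutation box --- but the bijection absorbs this redundancy, so no separate count is needed.) The one genuinely delicate point is the bookkeeping in the second paragraph: making precise how the cycle/arc structure of $\sigma$ repackages into the triple $(\alpha,\pi_T,\pi_B)$ when the bent cups and caps are pushed through $P^\otimes_{\pi_T}$ and $P^\otimes_{\pi_B}$, and symmetrically how a general $\xi\in\mathcal X$ straightens to a well-defined $\sigma$. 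I expect this to be cleanest to check first on $r=1,2$ (cf.\ Figure~\ref{fig:partial-perm}) and then in general; once the correspondence is pinned down, the absence of stray scalars and the injectivity follow with no further effort.
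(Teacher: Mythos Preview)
Your proposal is correct and takes essentially the same graphical approach as the paper: both identify partial transposition with the swap of the bottom-left and bottom-right boundary legs and observe that this exchanges the ``crossing'' matchings (the $P^\otimes_\sigma$) with exactly the matchings realised by elements of $\mathcal X$. The paper does this in one stroke by characterising the allowed wire types in $\mathcal X$ and reading off the bijection, whereas you argue the two inclusions separately; your appeal to linear independence for injectivity (a forward reference to Lemma~\ref{lemma:basis}) is harmless but unnecessary, since $\mathrm{PT}$ is already an involution and your two inclusions force it to restrict to a bijection of sets.
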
 
 
\begin{figure}[htbp]
\subfigure[]{\includegraphics{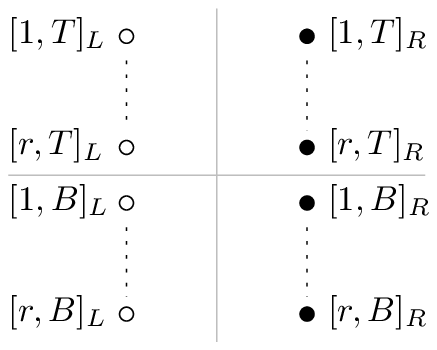}}\qquad
\subfigure[]{\includegraphics{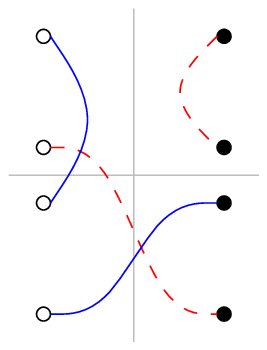}}\qquad
\subfigure[]{\includegraphics{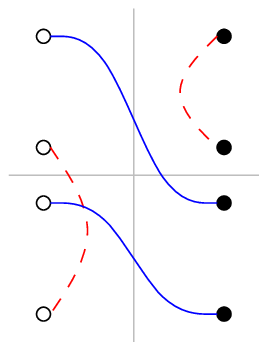}}
\caption{Graphical representation of operators from $\mathcal X$: (a) the labeling of the spaces, (b) possible (blue) and impossible (red, dashed) wires for an element of $\mathcal X$, (c) possible and impossible wires for an element of $(\mathrm{id} \otimes \mathrm{t}) \mathcal X$.} 
\label{fig:mathcal-X}
\end{figure}

\begin{proof}
Graphically, elements in $\mathcal X$ are collection of wires (tensor contractions) connecting $4r$ points (vector spaces): on the left hand side $[x,T/B]_L$ (corresponding to primal spaces) and on the right hand side $[y,T/B]_R$ (corresponding to dual spaces), see Figure \ref{fig:mathcal-X}. By definition, $\mathcal X$ contains all such diagrams, with the exception of the ones connecting some $[x,T]_L$ to some $[y,B]_R$, or some $[x,B]_L$ to some $[y,T]_R$. The partial transposition operation has the following effect on the diagram of an element of $\mathcal X$: it swaps the bottom left and right points, i.e. $[x,B]_L \leftrightarrow [x,B]_R$. Hence, $(\mathrm{id} \otimes \mathrm{t}) \mathcal X$ contains all wire diagrams, such that all wires are ``crossing'', i.e. points on the left are connected to points on the right. But these are exactly the diagrams for tensor permutations of $\S_{2r}$ which concludes the proof.
\end{proof}

\begin{lemma}\label{lemma:basis}
The set $\mathcal X$ is a basis for the algebra $\mathcal A_n$, for all $n \geq 2r$. 
\end{lemma}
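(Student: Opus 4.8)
The plan is to verify the two defining properties of a basis in turn: that $\mathcal X$ spans $\mathcal A_n$ (for every $n$), and that it is linearly independent whenever $n \geq 2r$. For the spanning part, I would first note that every element of $\mathcal X = \mathcal X_1 \sqcup \mathcal X_2 \sqcup \mathcal X_3$ is of the form $T_\alpha^{(n)}(P^\otimes_{\pi_T} \otimes P^\otimes_{\pi_B})$ for some $\alpha \in \hat{\mathcal S}_r$ and $\pi_T, \pi_B \in \mathcal S_r$ (take $\alpha = \emptyset$ for $\mathcal X_2$ and $\pi_T = \pi_B = \mathrm{id}$ for $\mathcal X_1$), and that conversely every such product lies in $\mathcal X$ by the very definition of $\mathcal X_3$. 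Then I would check that this family is closed under multiplication up to a nonzero scalar. Conjugating a Bell operator $B^{(n)}_{x,y}$ by $P^\otimes_{\pi_T} \otimes P^\otimes_{\pi_B}$ only relabels the top and bottom spaces on which it acts, so $(P^\otimes_{\pi_T} \otimes P^\otimes_{\pi_B})\, T_\beta^{(n)} = T^{(n)}_{\beta'}\, (P^\otimes_{\pi_T} \otimes P^\otimes_{\pi_B})$ for a suitably relabelled partial permutation $\beta'$; combining this with the product rule $T_\alpha^{(n)} T_{\beta'}^{(n)} = c\, T_\gamma^{(n)}(P^\otimes_{\mu_T} \otimes P^\otimes_{\mu_B})$ recorded just before Lemma \ref{lemma:X} — in which $c$ is a power of $n$, in particular nonzero — one sees that the product of two elements of $\mathcal X$ is a nonzero scalar times an element of $\mathcal X$. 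Since $\mathrm{id} = T^{(n)}_\emptyset \in \mathcal X$ and every generator of $\mathcal A_n$ lies in $\mathcal X_1 \cup \mathcal X_2 \subseteq \mathcal X$, it follows that $\mathrm{span}(\mathcal X)$ is a unital subalgebra of $M_{n^{2r}}(\mathbb C)$ containing the generators, hence $\mathcal A_n \subseteq \mathrm{span}(\mathcal X)$; the reverse inclusion is immediate since $\mathcal X_1, \mathcal X_2, \mathcal X_3$ are by construction made of products of generators.

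For linear independence I would invoke Lemma \ref{lemma:X}. The partial transposition $\mathrm{id} \otimes \mathrm{t}$ is a linear automorphism of $M_{n^{2r}}(\mathbb C)$, and being an involution it sends $\mathcal X = \{(\mathrm{id}\otimes\mathrm{t})P^\otimes_\sigma : \sigma \in \mathcal S_{2r}\}$ onto $\{P^\otimes_\sigma : \sigma \in \mathcal S_{2r}\}$. Therefore $\mathcal X$ is linearly independent if and only if the tensor permutation operators $P^\otimes_\sigma$, $\sigma \in \mathcal S_{2r}$, are linearly independent as operators on $(\mathbb C^n)^{\otimes 2r}$. This is the classical independence statement behind Schur--Weyl duality and it holds precisely because $n \geq 2r$: applying a hypothetical relation $\sum_{\sigma \in \mathcal S_{2r}} c_\sigma P^\otimes_\sigma = 0$ to the simple tensor $e_1 \otimes e_2 \otimes \cdots \otimes e_{2r}$ — which has pairwise distinct components, the only place $n \geq 2r$ is used — yields $\sum_\sigma c_\sigma\, e_{\sigma^{-1}(1)} \otimes \cdots \otimes e_{\sigma^{-1}(2r)} = 0$, a relation among $(2r)!$ pairwise distinct basis vectors, so $c_\sigma = 0$ for all $\sigma$. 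Combined with the spanning property, this shows that $\mathcal X$ is a basis of $\mathcal A_n$ for every $n \geq 2r$; in particular $\dim \mathcal A_n = (2r)!$, independently of $n$.

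The independence half is essentially a one-line consequence of Lemma \ref{lemma:X} together with a standard fact, so the only genuinely computational point — and the one I expect to be the main obstacle — is the multiplicative closure used in the spanning argument, i.e. verifying that a product such as $T_\alpha^{(n)} T_\beta^{(n)}$ collapses to a single $T_\gamma^{(n)}$ composed with top/bottom tensor permutations (times a power of $n$) and stays inside the family $\mathcal X$, rather than producing some new kind of operator. I expect the graphical calculus of Section \ref{sec:graphical-calculus} to make this routine: concatenating the wire diagrams of the two operators gives a new wire diagram, each closed loop formed contributes a factor $n$ (so the constant is a nonnegative power of $n$, never zero), and the remaining through-wires are read off as the data $\gamma, \mu_T, \mu_B$.
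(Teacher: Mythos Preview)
Your proposal is correct and follows essentially the same approach as the paper: both reduce linear independence, via Lemma \ref{lemma:X} and the linear automorphism $\mathrm{id}\otimes\mathrm{t}$, to the standard Schur--Weyl fact that $\{P^\otimes_\sigma\}_{\sigma\in\mathcal S_{2r}}$ is independent in $(\C^n)^{\otimes 2r}$ once $n\geq 2r$, proved by evaluating on a simple tensor with $2r$ distinct basis vectors. The only difference is that the paper dismisses the spanning part in one line (``clear from the definitions''), whereas you spell out the multiplicative closure of $\mathcal X$ up to scalars --- which is exactly the content the paper records just before Lemma \ref{lemma:X} and implicitly invokes.
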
 
\begin{proof}
It is clear from the definitions that the elements in $\mathcal X$ span the algebra $\mathcal A_n = \mathrm{span}_\C \mathcal X$. Using Lemma \ref{lemma:X}, we need to show that the family of tensor permutations matrices $\{P^\otimes_\sigma\}_{\sigma \in \S_{2r}}$ is linearly independent in $M_{n^{2r}}(\C)$. To this end, consider a linear combination
$$\sum_{\sigma \in S_{2r}} a_\sigma P^\otimes_\sigma = 0$$
Fix some permutation $\pi \in \S_{2r}$. Since $n \geq 2r$, we can pick an orthonormal family $\{e_1, \ldots, e_{2r}\}$ in $\C^n$. Let $x =  e_{\pi(1)} \otimes e_{\pi(2)} \otimes \cdots \otimes e_{\pi(2r)}$ and $y = e_1 \otimes e_2 \otimes \cdots \otimes e_{2r}$. It is easy to check that $\langle y, P^\otimes_\sigma x \rangle = \delta_{\pi,\sigma}$, hence $a_\pi = 0$. This shows that the linear combination must be trivial and thus the tensor permutation matrices are linearly independent.
\end{proof} 

\begin{remark}
Notice that usual permutation matrices are not linear independent, as one can see from the following relation holding in $\S_4$: $P_{(12)}+P_{(34)} = P_{(12)(34)} + P_\mathrm{id}$.
\end{remark}

Write $Q=Q_\emptyset^{(n)} $ (see \eqref{eq:def-Q-alpha}) and
define the left multiplication operator on $\mathcal A_n$ as follows: 
\begin{align*}
L_Q : \mathcal A_n &\rightarrow  \mathcal A_n \\
A &\mapsto  Q A 
\end{align*}
\begin{lemma}\label{lemma:spectral}
One has the following inclusion of spectra
\[
{\rm spec} (Q) \subseteq {\rm spec} (L_Q)
\]
\end{lemma}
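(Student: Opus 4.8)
The plan is to realize $Q = Q_\emptyset^{(n)}$ as a corner of the left-multiplication operator $L_Q$ on the finite-dimensional algebra $\mathcal A_n$, and to exploit the fact that $1 \in \mathcal A_n$. More precisely, since $\mathcal A_n$ is a subalgebra of $M_{n^{2r}}(\C)$ containing the identity matrix, and $Q \in \mathcal A_n$, every eigenvector of $Q$ that happens to lie in $\mathcal A_n$ (viewed as a subspace of $\C^{n^{2r}} \otimes \overline{\C^{n^{2r}}}$, i.e. of $M_{n^{2r}}(\C)$ with its natural identification) is also an eigenvector of $L_Q$ with the same eigenvalue: if $QA = \lambda A$ then $L_Q(A) = QA = \lambda A$. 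So the heart of the matter is to show that for each eigenvalue $\lambda$ of $Q$ there is at least one nonzero element of $\mathcal A_n$ in the corresponding eigenspace of $Q$.

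First I would fix an eigenvalue $\lambda \in {\rm spec}(Q)$ and let $p_\lambda$ be the spectral projection of $Q$ onto its $\lambda$-eigenspace. Because $Q$ is built out of the operators $\tilde T_\beta^{(n)}$ (see \eqref{eq:def-Q-alpha}), which all lie in $\mathcal A_n$, and $\mathcal A_n$ is a $*$-algebra (each $\tilde T_\beta^{(n)}$ is a projection, hence self-adjoint, and the $P^\otimes_{\pi_T} \otimes P^\otimes_{\pi_B}$ are unitaries) closed under the functional calculus of its self-adjoint elements — being a finite-dimensional $*$-subalgebra of a matrix algebra — the spectral projection $p_\lambda$ itself belongs to $\mathcal A_n$. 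Then $p_\lambda$ is a nonzero element of $\mathcal A_n$ satisfying $Q p_\lambda = \lambda p_\lambda$, so $p_\lambda$ is a nonzero eigenvector of $L_Q$ with eigenvalue $\lambda$. This gives $\lambda \in {\rm spec}(L_Q)$, and since $\lambda$ was arbitrary, ${\rm spec}(Q) \subseteq {\rm spec}(L_Q)$.

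The one point that needs care — and which I expect to be the main (mild) obstacle — is justifying that $Q$ is self-adjoint, so that it actually has a spectral decomposition with projections, and that those projections lie in $\mathcal A_n$ rather than merely in $M_{n^{2r}}(\C)$. Self-adjointness of $Q$ is immediate from \eqref{eq:def-Q-alpha} since each $\tilde T_\beta^{(n)}$ is an orthogonal projection and the coefficients $(-1)^{|{\rm dom}\,\beta| - |{\rm dom}\,\alpha|}$ are real. For the membership of the spectral projections, one uses that a finite-dimensional $*$-algebra $\mathcal A_n \subseteq M_N(\C)$ is closed under holomorphic (indeed continuous) functional calculus applied to its self-adjoint elements: writing $p_\lambda = f(Q)$ for a polynomial (or a continuous function, approximable by polynomials) $f$ with $f(\lambda) = 1$ and $f(\mu) = 0$ for the other eigenvalues $\mu$ of $Q$, we get $p_\lambda = f(Q) \in \mathcal A_n$ because $Q \in \mathcal A_n$ and $\mathcal A_n$ is norm-closed and closed under multiplication. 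Alternatively, one can bypass functional calculus entirely: pick any nonzero vector $\xi$ in the $\lambda$-eigenspace of $Q$, note that $\mathcal A_n \xi$ is a nonzero subspace of that eigenspace (since $Q$ commutes with everything it is applied to here only through $\mathcal A_n$ — actually $Q$ does not commute with $\mathcal A_n$ in general, so this shortcut fails), which is why the functional-calculus argument is the clean route. I would therefore present the functional-calculus version: it is short, and it is exactly the statement that a unital $*$-subalgebra of a matrix algebra contains the spectral projections of each of its self-adjoint elements.
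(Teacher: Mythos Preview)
Your argument is correct and is essentially the paper's own proof: observe that $Q$ is Hermitian, that its spectral projections $p_\lambda$ are polynomials in $Q$ and hence lie in $\mathrm{alg}\{Q\} \subseteq \mathcal A_n$, and that $L_Q(p_\lambda) = Qp_\lambda = \lambda p_\lambda$ exhibits $\lambda$ as an eigenvalue of $L_Q$. The only difference is cosmetic --- the paper invokes $P_i \in \mathrm{alg}\{Q\}$ directly rather than phrasing it as closure of a $*$-subalgebra under functional calculus --- so you could streamline by dropping the digression about the failed alternative route.
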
 
\begin{proof}
Since $Q$ is Hermitian, write
\[
Q = \sum_{i=1}^k \lambda_i P_i
\]
to be the spectral decomposition of $Q$, where $\lambda_i \in \R$ and 
$P_i$ are projections on the eigenspaces. 
Then, 
\[
(\lambda_i \id - L_Q )P_i =  (\lambda_i I - Q )P_i =0
\]
On the other hand,
\[
P_i \in {\rm alg} \{Q\} \subseteq \mathcal A_n 
\]
Hence, $\lambda_i \id - L_Q$ is not invertible on $\mathcal A_n$ and thus $\lambda_i \in  {\rm spec} (L_Q)$.  
\end{proof} 

The algebra $\mathcal A_n$ has a combinatorial nature, and $\mathrm{dim} (\mathcal A_n) = (2r)!$. Notice that the dimension of $\mathcal A_n$ does not grow with $n$, fact which is crucial for the analysis that follows. The previous lemma relates the spectrum of the operator $Q=Q_\emptyset^{(n)}$ (a high dimensional object) to the spectrum of $L_Q$ which is of bounded dimension. We now state the main result of this section. 
\begin{theorem}\label{thm:Q-asympt-positive}
For all $r \geq 1$, the spectrum of the matrix $Q=Q_\emptyset^{(n)}$ is at a distance $O(1/n)$ from the set $\{0,1\}$.
\end{theorem}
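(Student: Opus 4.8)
The plan is to prove that $Q := Q_\emptyset^{(n)}$ is \emph{asymptotically idempotent}, i.e.\ that $\norm{Q^2-Q} = O(1/n)$ in operator norm, and to deduce the spectral statement from this together with the fact that $Q$ is Hermitian. Indeed, each $\tilde T_\beta^{(n)}$ is an orthogonal projection, so $Q$ is Hermitian and, by spectral mapping, $\norm{Q^2-Q} = \max_{\lambda\in {\rm spec}(Q)}|\lambda|\,|\lambda-1|$; also $\norm{Q}\le\sum_{\beta\in\hat\S_r}\norm{\tilde T_\beta^{(n)}} = |\hat\S_r|$, so ${\rm spec}(Q)$ stays in a fixed bounded interval. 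Since $\norm{Q^2-Q} = O(1/n)$ and $|\lambda|\le|\hat\S_r|$, an elementary estimate on the real function $\lambda\mapsto|\lambda|\,|\lambda-1|$ forces $\mathrm{dist}(\lambda,\{0,1\}) = O(1/n)$ for every $\lambda\in {\rm spec}(Q)$, which is exactly the assertion. (If one wishes to stay in a fixed dimension, Lemma~\ref{lemma:spectral} lets one run the same argument on $L_Q$ over the $(2r)!$-dimensional algebra $\mathcal A_n$, using $L_Q^2 = L_{Q^2}$, $\norm{L_A}\le\norm{A}$ and the fact that the spectral radius is at most the norm; I will present the basis-free version, which is shorter.)

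Everything thus reduces to the estimate $\norm{Q^2-Q} = O(1/n)$. Expanding, $Q^2 = \sum_{\alpha,\beta\in\hat\S_r}(-1)^{|{\rm dom}\,\alpha|+|{\rm dom}\,\beta|}\,\tilde T_\alpha^{(n)}\tilde T_\beta^{(n)}$. First I would identify each partial permutation $\alpha$ with its graph, a partial matching $E_\alpha$ in the complete bipartite graph on top vertices $\{1,\dots,r\}$ and bottom vertices $\{1,\dots,r\}$, so that $\tilde T_\alpha^{(n)} = \prod_{e\in E_\alpha}\tilde B_e^{(n)}$ (each $\tilde B_e^{(n)}$ implicitly padded by identities), a product of \emph{commuting} rank-one Bell projections. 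Call a pair $(\alpha,\beta)$ \emph{compatible} if $E_\alpha\cup E_\beta$ is again a matching, and \emph{incompatible} otherwise. For a compatible pair all the $\tilde B_e^{(n)}$ with $e\in E_\alpha\cup E_\beta$ commute and are idempotent, so $\tilde T_\alpha^{(n)}\tilde T_\beta^{(n)} = \tilde T_{\alpha\vee\beta}^{(n)}$, where $\alpha\vee\beta\in\hat\S_r$ is the partial permutation with graph $E_\alpha\cup E_\beta$. Grouping the compatible terms according to $\gamma := \alpha\vee\beta$ and applying the elementary identity $\sum_{A\cup B = S}(-1)^{|A|+|B|} = (-1)^{|S|}$ (each element of $S$ lies in $A$ only, in $B$ only, or in both, contributing $-1,-1,+1$, and one multiplies over elements of $S$), one finds that the compatible part of $Q^2$ equals precisely $\sum_{\gamma}(-1)^{|{\rm dom}\,\gamma|}\tilde T_\gamma^{(n)} = Q$. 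Hence $Q^2 - Q = \sum_{(\alpha,\beta)\ \text{incompatible}}(-1)^{|{\rm dom}\,\alpha|+|{\rm dom}\,\beta|}\,\tilde T_\alpha^{(n)}\tilde T_\beta^{(n)}$.

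It remains to estimate the incompatible terms. If $(\alpha,\beta)$ is incompatible then some vertex $v$ lies on two distinct edges $e\in E_\alpha$ and $e'\in E_\beta$ — it cannot lie on two edges of a single matching — and $e,e'$ share exactly this one vertex. Using that the Bell projections of a matching mutually commute, I would factor $\tilde T_\alpha^{(n)} = P\,\tilde B_e^{(n)}$ and $\tilde T_\beta^{(n)} = \tilde B_{e'}^{(n)}\,P'$ with $P,P'$ projections, whence $\norm{\tilde T_\alpha^{(n)}\tilde T_\beta^{(n)}}\le\norm{\tilde B_e^{(n)}\tilde B_{e'}^{(n)}}$. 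A short direct calculation, carried out on the three tensor factors that support $e\cup e'$, gives $\norm{\tilde B_e^{(n)}\tilde B_{e'}^{(n)}} = 1/n$: two maximally entangled (Bell) projections sharing a single leg are nearly orthogonal. Since $\hat\S_r$ is finite with $|\hat\S_r|$ not depending on $n$, summing over the at most $|\hat\S_r|^2$ incompatible pairs yields $\norm{Q^2-Q}\le|\hat\S_r|^2/n = O(1/n)$, completing the proof.

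The combinatorial identity and the final spectral-mapping step are routine. The one place requiring care — the main ``obstacle'', though it is more of a bookkeeping point than a real difficulty — is the last paragraph: one must verify that in every incompatible product the conflicting pair $\tilde B_e^{(n)},\tilde B_{e'}^{(n)}$ can always be slid next to one another (this is precisely where ``a partial permutation is a matching, hence its Bell projections commute'' enters) and that the resulting $O(1/n)$ bound is uniform over the finitely many pairs — there is no analytic subtlety, because the number of terms depends only on $r$. Finally, if a later section needs more than the distance-to-$\{0,1\}$ statement — e.g.\ the precise asymptotic eigenprojections of $Q$ used for the entropy optimality of Bell states in Section~\ref{sec:Bell-optimal} — one can strengthen the analysis by diagonalizing the limiting $(2r)!\times(2r)!$ matrix $\lim_{n}L_Q$ explicitly; this is not needed for the statement as phrased.
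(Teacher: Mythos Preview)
Your argument is correct and is genuinely more direct than the paper's. Both proofs rest on the same product computation---that $\tilde T_\alpha^{(n)}\tilde T_\beta^{(n)}$ equals $\tilde T_{\alpha\vee\beta}^{(n)}$ when $\alpha\vee\beta$ exists in $\hat\S_r$ and has operator norm $O(1/n)$ otherwise (this is the paper's equation \eqref{eq:L-1-1}). The paper, however, does not stop there: it embeds $Q$ into the $(2r)!$-dimensional algebra $\mathcal A_n$ generated by the $T_\alpha^{(n)}$ together with the top/bottom tensor permutations, passes to the left-multiplication operator $L_Q$ via Lemma~\ref{lemma:spectral}, writes $L_Q$ as a $3\times 3$ block matrix relative to the decomposition $\tilde{\mathcal X}=\tilde{\mathcal X}_1\sqcup\tilde{\mathcal X}_2\sqcup\tilde{\mathcal X}_3$, and argues that as $n\to\infty$ this matrix becomes lower triangular with diagonal entries in $\{0,1\}$. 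You bypass all of that machinery by feeding the product formula directly into $Q^2=\sum_{\alpha,\beta}(-1)^{|{\rm dom}\,\alpha|+|{\rm dom}\,\beta|}\tilde T_\alpha^{(n)}\tilde T_\beta^{(n)}$, collapsing the compatible part to $Q$ via the inclusion--exclusion identity $\sum_{A\cup B=S}(-1)^{|A|+|B|}=(-1)^{|S|}$, and bounding the incompatible remainder by $|\hat\S_r|^2/n$. The conclusion then follows from spectral mapping for the Hermitian operator $Q$. Your route is shorter and avoids introducing $\mathcal A_n$, its basis, and Lemmas~\ref{lemma:X}--\ref{lemma:spectral}; what the paper's approach buys in exchange is a bit of extra structural information (an explicit finite-dimensional model and the triangular form of $\lim_n L_Q$), but this is not used downstream---Section~\ref{sec:Bell-optimal} only needs the asymptotic positivity of $Q_\alpha^{(n)}$, which your estimate yields immediately through the decomposition \eqref{eq:decomp-Q-alpha}.
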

\begin{proof}
By the inclusion of spectra proved in Lemma \ref{lemma:spectral}, it suffices to show the conclusion for the operator $L_Q$. By doing this, we are working with a matrix of fixed size, the dependence in $n$ appearing in the entries of $L_Q$. 
For computational simplicity, we rescale the basis $\mathcal X$ into $\tilde{\mathcal X}$, 
in the sense of \eqref{operator_rescaled},
so that all the elements in $\tilde{\mathcal X}$ are projections, up to multiplication by $P_{\pi_T} \otimes P_{\pi_B}$.  
Based on Lemma \ref{lemma:basis},
we write the operator $L_Q$ in the basis $\tilde{\mathcal X}$.
\[
L_Q = 
 \bordermatrix{ 
~ & 1  & 2 &3 \cr
1 & L_{1,1} & L_{1,2}&L_{1,3} \cr
2 & L_{2,1} & L_{2,2} &L_{2,3}\cr
3 & L_{3,1} &L_{3,2}& L_{3,3}\cr}
\]
The indices above indicate which spaces are associated;
$i$ corresponds to the space spanned by $\mathcal X_i$. First, we analyze the blocks $L_{1,1}$, $L_{2,1}$ and $L_{3,1}$. We claim that
\begin{equation}\label{eq:L-1-1}
\tilde T_\alpha^{(n)} \tilde T_\beta^{(n)} =  \begin{cases} 
\tilde T_{\alpha \vee \beta}^{(n)} & \text{if $\alpha \vee \beta$ exists in the poset }(\hat S_r, \leq)\\
 O(n^{-1}) & \text{otherwise} 
\end{cases} 
\end{equation}
Here, we understand $O(n^{-1})$ to be an element of norm $O(n^{-1})$ from $\mathcal A_n$. 
Indeed, one has that
\[
\left (\tilde B_{x,y} \otimes I \right)\cdot \left(\tilde B_{z,w} \otimes I\right)  = 
\begin{cases}
\tilde B_{x,y} \otimes I & \text{if $x=z$ and $y=w$} \\
\tilde B_{x,y} \otimes \tilde B_{z,w} \otimes I & \text{if $x\not = z$ and $y \not = w$ } \\
O(n^{-1}) & \text{otherwise}
\end{cases} 
\]
This follows from the fact that in order to obtain non-vanishing elements when concatenating $\tilde B_{x,y}$ and $ \tilde B_{z,w} $, the half loops have to meet either the identity matrix or the exact same half loop. Taking products of the previous relations yields \eqref{eq:L-1-1}.
Then, 
\begin{align}\label{calculation:spectral}
Q \tilde T_\beta^{(n)} = Q^{(n)}_\emptyset \tilde T_\beta^{(n)} 
&=  \sum_{\alpha} (-1)^{|{\rm dom}(\alpha)|}  \tilde T_{\alpha}^{(n)} \tilde T_{\beta}^{(n)}  \\ 
&=  \sum_{\substack{ \alpha_1 \leq \beta \\ {\rm dom} (\alpha_2) \cap {\rm dom} (\beta) = \emptyset 
\\\alpha_2({\rm dom} (\alpha_2)) \cap \beta({\rm dom} (\beta)) = \emptyset}} 
(-1)^{|{\rm dom} (\alpha_1)|+ |{\rm dom} (\alpha_2)| } 
\tilde T_{\alpha_2}^{(n)}  \tilde T_{\beta}^{(n)}  + O(n^{-1}) \notag\\ 
&= \begin{cases}
Q + O(n^{-1}) & \text{if $\beta = \id$}\notag \\ 
O(n^{-1}) & \text{otherwise}
\end{cases}
\end{align}
In the second equality, we gather all the vanishing terms in the $O(n^{-1})$ such that the surviving partial permutations $\alpha$ are of the form $\alpha_1 \oplus \alpha_2$. For the last equality, we used:
\[
\sum_{\alpha_1 \leq \beta} (-1)^{|{\rm dom} (\alpha_1)|} = (1-1)^{|{\rm dom} (\beta)|} = \delta_{\beta,\mathrm{id}}
\]
This implies that 
\[
L_{1,1} = 
\begin{pmatrix}
\begin{matrix}  
1  \\
\pm 1 \\
\vdots \\
\pm 1 
\end{matrix} \vline
& O (n^{-1} )
\end{pmatrix}; \qquad L_{2,1} = O(n^{-1}); \qquad L_{3,1} = O (n^{-1})
\]
The signs $\pm 1$ appearing in the first column of $L_{1,1}$ depend on the order of basis matrices and we do not require their exact values. Note also that one can show that $L_{2,1} = 0$, but we will not need this result in what follows. 

Next, we study $L_{1,2}$, $L_{2,2}$ and $L_{3,2}$.  
Since $\tilde T^{(n)}_\emptyset = I \not \in \mathcal X_2$, we know that $\tilde T_{\alpha}$ acts on $\mathcal X_2$ as
\[
\tilde T_{\alpha}^{(n)}  P^\otimes_{\pi_T} \otimes P^\otimes_{\pi_B} \begin{cases}
 = P^\otimes_{\pi_T} \otimes P^\otimes_{\pi_B} & \text{if $\alpha = \emptyset$} \\
\in {\rm span} (\mathcal X_3) & \text{otherwise} 
\end{cases} 
\]
Hence, $Q P^\otimes_{\pi_T} \otimes P^\otimes_{\pi_B} =  P^\otimes_{\pi_T} \otimes P^\otimes_{\pi_B} + X_3$, with $X_3 \in  {\rm span} (\mathcal X_3)$. Thus,
\[
L_{1,2} =0;\qquad L_{2,2} = I; \qquad \text{$L_{3,2}$ = (unspecified)} 
\]

Finally, using \eqref{eq:L-1-1}, we have 
\begin{equation}
\tilde T_\alpha^{(n)} \tilde T_\beta^{(n)}P^\otimes_{\pi_T} \otimes  P^\otimes_{\pi_B}=  \left[ \begin{cases} 
\tilde T_{\alpha \vee \beta}^{(n)} & \text{if $\alpha \vee \beta$ exists in the poset }(\hat S_r, \leq)\\
 O(n^{-1}) & \text{otherwise} 
\end{cases} \right ] P^\otimes_{\pi_T} \otimes  P^\otimes_{\pi_B}
\end{equation}
Since $I\not\in \mathcal X_3$ we have $\beta \neq \emptyset$ and then
it follows from \eqref{calculation:spectral} that 
\[
L_{1,3} = O(n^{-1}); \qquad L_{2,3}=O(n^{-1});\qquad  L_{3,3}= O(n^{-1})
\]

Therefore, as $n \rightarrow \infty$, the matrix representation of $L_Q$ converges to a lower semi-triangular matrix whose diagonal elements are $0$ or $1$, proving the theorem.
\end{proof}

As a corollary of the theorem, we obtain the asymptotic positivity of the matrices $Q_\alpha^{(n)}$.
\begin{corollary}\label{cor:Q-beta-asympt-positive}
For all $r \geq 1$ and $\alpha \in \hat{\mathcal S}_r$, the spectrum of the matrix $Q_\alpha^{(n)}$ is at a distance $O(1/n)$ from the set $\{0,1\}$.
\end{corollary}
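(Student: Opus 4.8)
The plan is to reduce the statement directly to Theorem \ref{thm:Q-asympt-positive} by exploiting the tensor factorization \eqref{eq:decomp-Q-alpha}. Recall from that identity that
\[
Q_\alpha^{(n)} = \left[ \bigotimes_{x \in \mathrm{dom}(\alpha)} \tilde B^{(n)}_{x,\alpha(x)} \right] \otimes \hat Q_\emptyset^{(n)},
\]
where $\hat Q_\emptyset^{(n)}$ is a copy of $Q_\emptyset^{(n)}$ with $r$ replaced by $\hat r = r - |\mathrm{dom}(\alpha)|$. First I would observe that each $\tilde B^{(n)}_{x,\alpha(x)}$ is an orthogonal projection (it is the normalized rank-one Bell projector), hence so is the bracketed operator $P := \bigotimes_{x \in \mathrm{dom}(\alpha)} \tilde B^{(n)}_{x,\alpha(x)}$, and in particular $\mathrm{spec}(P) \subseteq \{0,1\}$ exactly.

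Next I would invoke the elementary fact that the spectrum of a tensor product is the product of the spectra: $\mathrm{spec}(P \otimes \hat Q_\emptyset^{(n)}) = \{\,\lambda \mu : \lambda \in \mathrm{spec}(P),\ \mu \in \mathrm{spec}(\hat Q_\emptyset^{(n)})\,\}$. If $\hat r \geq 1$, Theorem \ref{thm:Q-asympt-positive} applied to $\hat Q_\emptyset^{(n)}$ (which is legitimate, as it is literally a copy of $Q_\emptyset^{(n)}$ for the fixed parameter $\hat r$) gives a constant $\delta_n = O(1/n)$ with $\mathrm{spec}(\hat Q_\emptyset^{(n)}) \subseteq [-\delta_n,\delta_n] \cup [1-\delta_n, 1+\delta_n]$; if $\hat r = 0$, i.e. $\alpha$ is a full permutation, then $\hat Q_\emptyset^{(n)}$ degenerates to the scalar $1$ and $Q_\alpha^{(n)} = P$ is exactly a projection, so the claim is immediate. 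In the remaining case, for $\lambda \in \{0,1\}$ and $\mu$ in the above set the product $\lambda\mu$ is either $0$ (when $\lambda = 0$) or equal to $\mu$ (when $\lambda = 1$), and in both cases lies in $[-\delta_n,\delta_n] \cup [1-\delta_n, 1+\delta_n]$. Hence every eigenvalue of $Q_\alpha^{(n)}$ is within $\delta_n = O(1/n)$ of $\{0,1\}$, which is the assertion.

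I do not expect a serious obstacle here: the substantive content is entirely contained in Theorem \ref{thm:Q-asympt-positive}, and the corollary is essentially a one-line consequence of \eqref{eq:decomp-Q-alpha} once one checks that the ``Bell part'' $P$ is a projection. The only points requiring a little care are the degenerate cases ($\alpha = \emptyset$, which is just Theorem \ref{thm:Q-asympt-positive} itself, and $\alpha$ a full permutation, handled above), and keeping the implied constant in $O(1/n)$ uniform — but since $r$, and hence $\hat r$, is fixed, this uniformity is automatic.
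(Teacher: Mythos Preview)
Your proposal is correct and follows essentially the same approach as the paper: the paper's proof is a one-line invocation of the decomposition \eqref{eq:decomp-Q-alpha}, Theorem \ref{thm:Q-asympt-positive}, and the observation that the Bell tensor factor is a rank-one projector. You have simply spelled out the spectrum-of-a-tensor-product step and handled the degenerate cases $\hat r = 0$ and $\alpha = \emptyset$ explicitly, which the paper leaves implicit.
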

\begin{proof}
This follows from equation \eqref{eq:decomp-Q-alpha}, Theorem \ref{thm:Q-asympt-positive} and the fact that first tensor factor in equation \eqref{eq:decomp-Q-alpha} is a rank-one projector. 
\end{proof}

\section{Output states for product of conjugate random quantum channels}\label{sec:main}

This section contains the main probabilistic result of the paper, a limit theorem for a model of random matrices. We investigate output quantum states of a product of
$r$ identical random quantum channels and their complex conjugates:
\be
\Psi_n^{\otimes r} = \Phi_n^{\otimes r} \otimes \bar\Phi_n^{\otimes r} 
\ee
Here, $\Psi_n = \Phi_n \otimes \bar\Phi_n$ where $\Phi_n$ are random quantum channels, as defined in Section \ref{sec:graphical-calculus}. Given a sequence of input states $\psi_n \in \C^{2rn}$, we are interested in the sequence of output states 
\be
Z_n = \left[\left(\Phi_n^C\right)^{\otimes r} \otimes \left(\bar\Phi_n^C\right)^{\otimes r}\right] (\psi_n \psi_n^*),
\ee
where we consider complementary channels $\Phi_n^C$. Note that, as it was noted in Section \ref{sec:graphical-calculus}, replacing a channel by its complementary does not change the (non-zero) spectrum of the outputs for pure inputs. This is important for us given the asymptotic regime we are interested in ($k$ fixed and $n \to \infty$), since the output matrix $Z_n$ lives in the fixed space $M_{k^{2r}}(\C)$.

In the graphical calculus, the sequence of vectors $\psi_n$ on $\C^{2rn}$ 
is represented as the $rn \times rn$ matrix $A_n$ 
through usual isomorphism between $\C^{2rn}$ and $M_{rn}(\C)$, see Figure \ref{fig:Zn} for the case $r=2$. 

\begin{figure}[htbp]
\includegraphics{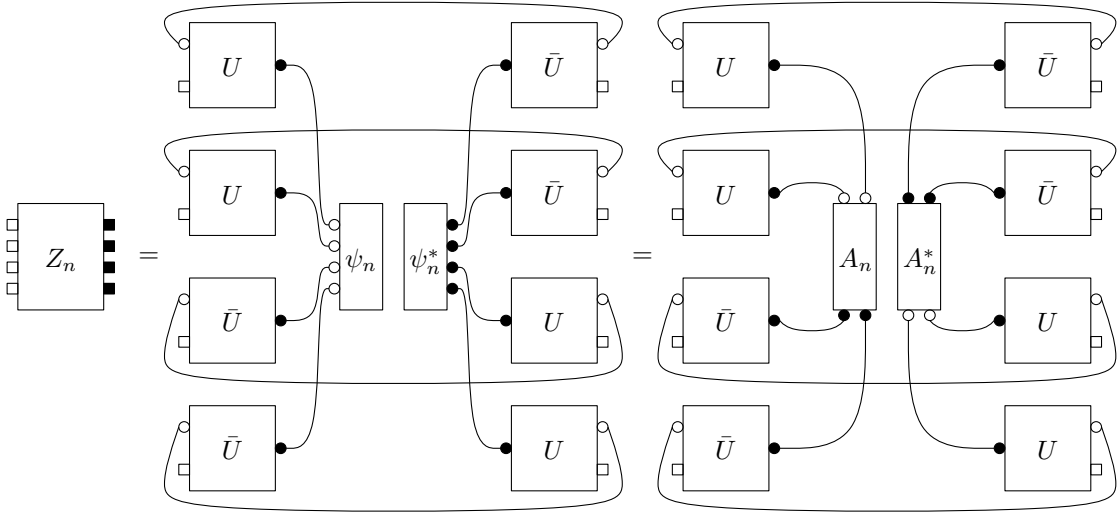}
\caption{Diagram for the output matrix $Z_n$ in the case $r=2$.} 
\label{fig:Zn}
\end{figure}

We are interested in computing the asymptotic moments $\Ex \trace Z_n^p$ of the output matrix $Z_n$. To do this, we are going to use the graphical Weingarten calculus, Theorem \ref{thm:graphical-Wg}. The diagram for $\trace Z_n^p$ can be thought of containing $p$ copies of the diagram in Figure \ref{fig:Zn} connected in a tracial manner.  This diagram contains $2rp$ boxes $U$, which shall be labeled by triples $[i,x,P]$, where
\begin{itemize}
\item $i \in \{1, \ldots, p\}$ indicates the index of the copy of $Z_n$ the $U$ box belongs to;
\item $x \in \{1, \ldots, r\}$ denotes the index of the channel $\Phi$ or $\bar \Phi$;
\item $P \in \{T,B\}$ indicates whether the box $U$ belongs to a $\Phi$ channel ($P=T$) or to a $\bar \Phi$ channel ($P=B$).
\end{itemize}
\begin{figure}[htbp]
\includegraphics{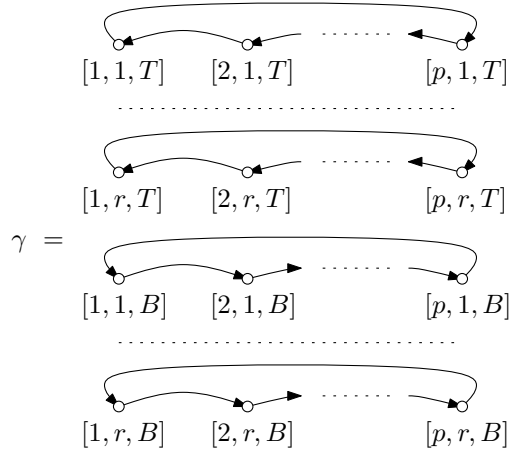}
\caption{A representation of the permutation $\gamma$ encoding the tracial structure of the moments. Note that in the top $r$ rows, $\gamma$ acts like a decreasing cycle, because the unitary operator $U$ appears on the left hand side of the channel $\Phi$ so it is connected to the $\bar U$ operator appearing in the previous block. In the bottom rows, the situation is reversed for the channel $\bar \Phi$.} 
\label{fig:gamma}
\end{figure}
Let $\delta \in \S_{2rp}$ be the permutation that swaps top and bottom indices
$$\delta = \prod_{i=1}^p\prod_{x=1}^r ([i,x,T] \, [i,x,B])$$
Also, define  $\gamma \in \S_{2rp}$ as the permutation that encodes the trace of the product of the $Z_n$ matrices, see Figure \ref{fig:gamma}:
$$
\gamma = \prod_{x=1}^r \left([r,x,T] \, [r-1,x,T] \cdots [1,x,T] \right) \prod_{y=1}^r \left([1,y,B] \, [2,y,B] \cdots [r,y,B] \right)
$$ 

Then, we have the following formula for the moments of  the random matrix $Z_n$:
\begin{theorem}\label{thm:main-moments} 
For a given sequence of inputs $\psi_n$ 
\be
\Ex \trace Z_n^p
=(1+ O(n^{-2}))   \sum_{\alpha_i \leq \beta_i } 
k^{-|\alpha^{-1}\gamma|} \prod_{i=1}^p 
\left[ 
\langle \psi_n | \tilde T_{\beta_i}^{(n)}  | \psi_n \rangle 
\left(-k^{-1}\right)^{|{\rm dom}(\beta_i)|-|{\rm dom}(\alpha_i)|}
\right]
\ee 
Here, the operators $\tilde T_{\beta_i}^{(n)}$ where defined in \eqref{operator_rescaled},  $\alpha_i,  \beta_i \in \hat \S_{r}$ are partial permutations, 
and $\alpha \in \S_{2rp}$ is a permutation induced from the $\alpha_i$'s in the following way:
\be\label{eq:def-alpha}
\alpha = \prod_{i=1}^p \prod_{x \in \mathrm{dom}(\alpha_i)}([i,x,T], [i, \alpha_i (x), B] )
\ee
\end{theorem}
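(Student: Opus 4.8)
\textbf{The plan} is to compute $\Ex \trace Z_n^p$ with the graphical Weingarten formula and then extract the leading asymptotics as $n \to \infty$ with $k$ fixed. First I would draw $\trace Z_n^p$ as $p$ copies of the diagram in Figure~\ref{fig:Zn} glued in a tracial way. The resulting diagram contains the $2rp$ boxes $U$ labeled by triples $[i,x,P]$ together with their conjugates $\bar U$, $p$ copies of the input box $A_n$ and of $A_n^*$, the rank-one environment projectors $P_e$, and wires dictated by the complementary-channel structure and by the trace. Since $\bar\Phi$ is built out of $\bar U$, the roles of $U$ and $\bar U$ are interchanged on the ``bottom'' half of the picture; this is exactly why the trace permutation $\gamma$ of Figure~\ref{fig:gamma} runs as a decreasing cycle on the top rows and an increasing cycle on the bottom rows, and it is convenient to record this conjugation twist by the top/bottom swap $\delta$. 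Applying Theorem~\ref{thm:graphical-Wg} produces
\[
\Ex \trace Z_n^p = \sum_{\sigma,\tau\in\S_{2rp}} \Wg(kn,\sigma^{-1}\tau)\,\mathcal D_{\sigma,\tau},
\]
where $\mathcal D_{\sigma,\tau}$ is the scalar obtained by multiplying the contributions of the loops of the removed diagram.

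\textbf{The second step} is loop-counting. Each removal $\mathcal D_{\sigma,\tau}$ decomposes into loops of three types: loops living entirely in environment spaces $\C^k$, which contribute powers of $k$; loops living entirely in channel-input spaces $\C^n$ and avoiding the input boxes, which contribute powers of $n$; and loops threading the boxes $A_n$ and $A_n^*$, which contract $\psi_n$ against permutation-type operators and yield matrix elements of $\psi_n$. Combining the $n$-power from the second family with the factor $(kn)^{-(2rp+|\sigma^{-1}\tau|)}$ of the Weingarten asymptotics~\eqref{eq:Wg-asympt}, a geodesic estimate on the Cayley graph of $\S_{2rp}$ (the triangle inequality for $|\cdot|$, measured relative to $\gamma$ and $\delta$) shows the exponent of $n$ is always $\le 0$, with equality only when $\sigma$ and $\tau$ lie on prescribed geodesics. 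These extremal pairs are precisely the ones in which the part of $\sigma$ acting on the environment/output side has the block form~\eqref{eq:def-alpha} coming from a $p$-tuple of partial permutations $\alpha_i\in\hat\S_r$, while the part of $\tau$ acting on the input side gives partial permutations $\beta_i$ with $\alpha_i\le\beta_i$; every remaining term is smaller by a relative factor $O(n^{-2})$, since both the Weingarten error in~\eqref{eq:Wg-asympt} and the next geodesic gap are of that order.

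\textbf{The third step} is to evaluate the surviving terms. For such a configuration the input-box loops factorize over the copies $i=1,\dots,p$, and the factor attached to copy $i$ is exactly $\langle\psi_n|\tilde T_{\beta_i}^{(n)}|\psi_n\rangle$: the operator $\tilde T_{\beta_i}^{(n)}$ of~\eqref{operator_rescaled} appears because the $\tau$-pairing glues ``top'' channel inputs to ``bottom'' ones through un-normalized Bell states, the $1/n$ normalization being absorbed into the rescaled operators. The environment loops, once combined with the remaining powers of $k$ and $n$ coming from $\Wg(kn,\cdot)$, contribute $k^{-|\alpha^{-1}\gamma|}$, where $\alpha\in\S_{2rp}$ is induced from the $\alpha_i$ by~\eqref{eq:def-alpha}. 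Finally the Möbius weight $\Mob(\sigma^{-1}\tau)$ is multiplicative on cycles, its relevant cycles splitting into the disjoint ``gaps'' between each $\alpha_i$ and $\beta_i$; summing over these gaps telescopes the signs into $\prod_{i=1}^p(-k^{-1})^{|\mathrm{dom}(\beta_i)|-|\mathrm{dom}(\alpha_i)|}$. Assembling the three families of factors, together with the overall $1+O(n^{-2})$ correction, gives the stated identity.

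\textbf{The main obstacle} I anticipate is the loop-counting combined with the proof that only the claimed family of pairs $(\sigma,\tau)$ contributes at leading order: one has to track carefully how the two structural permutations --- $\gamma$, encoding the trace, and $\delta$, encoding the complex-conjugation twist between $\Phi$ and $\bar\Phi$ --- interact through $\sigma$ and $\tau$, and then use the triangle inequality for $|\cdot|$ on $\S_{2rp}$ sharply enough both to identify the maximizers and to confirm the corrections are genuinely $O(n^{-2})$ rather than merely $O(n^{-1})$. A related nuisance is that $k$ is a fixed constant and cannot absorb any error, so the powers of $k$ produced by the environment loops must reconstruct $k^{-|\alpha^{-1}\gamma|}$ and the Möbius signs exactly, forcing one to be scrupulous about which loops live over $\C^k$ and which over $\C^n$.
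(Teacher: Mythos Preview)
Your outline captures the overall architecture correctly, but there is a genuine gap in the second step. You treat the ``loops threading the boxes $A_n$ and $A_n^*$'' as producing mere matrix elements of $\psi_n$ and then count powers of $n$ only from the pure $\C^n$-loops together with the Weingarten factor. This is not sufficient: the $A_n$-necklace contribution $f_{A_n}(\beta) = \trace_{\beta^{-1}\delta}(A_n,A_n^*,\ldots,A_n,A_n^*)$ is itself a power of $n$ in general, since $A_n \in M_{n^r}(\C)$ with only $\|A_n\|_2 = 1$ fixed. If you drop this contribution from the power count, the geodesic argument collapses to $-|\alpha|-|\alpha^{-1}\beta|\le 0$ and singles out the wrong set of survivors. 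The paper handles this by invoking the $L^2$ generalized-trace bound of Appendix~\ref{sec:trace-bounds}, which gives $f_{A_n}(\beta) \le n^{\mathrm{dist}(\beta^{-1}\delta,\Theta)}$ for a specific \emph{set} $\Theta \subset \S_{2rp}$ of ``two-column'' permutations; this bound is one of the new technical results of the paper, and without it the asymptotic analysis does not close.

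As a consequence, the saturation analysis is also more delicate than you indicate. The relevant inequality is not a triangle inequality relative to the single permutations $\gamma$ and $\delta$, but rather $\min_{\theta\in\Theta}|\beta\theta\delta| \ge |\beta|$ combined with $\id \to \alpha \to \beta$. To extract the surviving $\beta$, one must show that a large explicit family of transpositions lies in $\Theta\delta$ and then argue that no cycle of $\beta$ can contain two top labels, two bottom labels, or a top/bottom pair with different $i$-indices; this is what forces $\beta$ into the product-of-transpositions form~\eqref{beta-cand} indexed by partial permutations $\beta_i \in \hat\S_r$. Your sketch asserts this structure without supplying the mechanism. Finally, the $O(n^{-2})$ (rather than $O(n^{-1})$) error is justified in the paper by a parity argument on $|\alpha|+|\alpha^{-1}\beta|+\mathrm{dist}(\beta^{-1},\Theta\delta)$, not merely by the Weingarten expansion; you should make this explicit as well.
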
 
\begin{proof} 
First, the graphical Weingarten formula from Theorem \ref{thm:graphical-Wg} allows us to write $\Ex \trace[Z_n^p]$ as a sum over diagrams indexed by a pair of permutations $(\alpha, \beta) \in \S_{2rp}$. For such a pair, the corresponding diagram will contain (see Figure \ref{fig:diagram-U} for the wiring of a box $U$ appearing on the top):

\begin{figure}[htbp]
\subfigure[]{\includegraphics{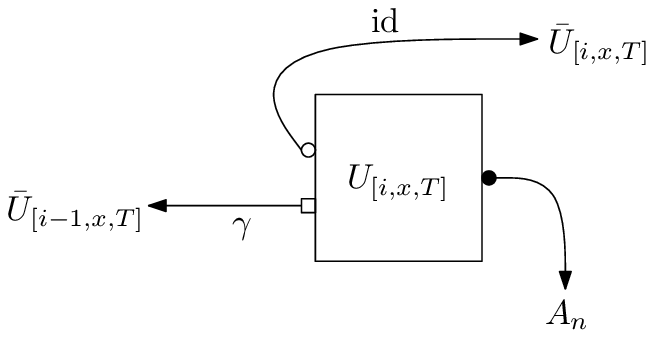}}\qquad
\subfigure[]{\includegraphics{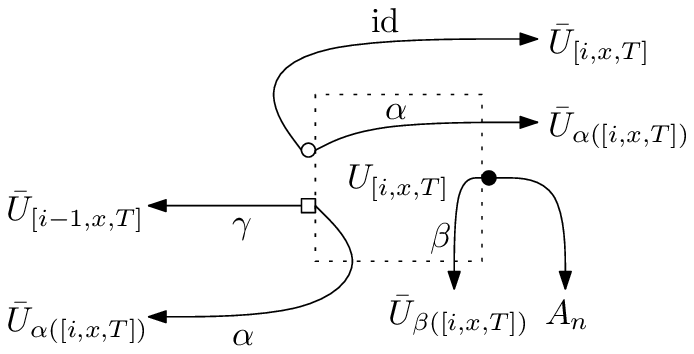}}
\caption{Wires attached to a $U$-box appearing on top ($\Phi$ channels) before and after the graph expansion are represented in (a), respectively (b).} 
\label{fig:diagram-U}
\end{figure}

\begin{enumerate}
\item $\# \alpha$ loops corresponding to $\C^n$, which give a contribution of $n^{\#\alpha}$;
\item $\#(\alpha^{-1}\gamma)$ loops corresponding to $\C^k$, which give a contribution of $k^{\#(\alpha^{-1}\gamma)}$;
\item \emph{necklace} diagrams containing wires and $A_n$ boxes, which give a contribution of $f_{A_n} (\beta)$.
\end{enumerate}
It follows that 
\be\label{eq:EtrZp}
\Ex \trace[Z_n^p] = \sum_{\alpha,\beta \in S_{2rp}} 
n^{\#\alpha} k^{\#(\alpha^{-1}\gamma)} f_{A_n} (\beta) \Wg(\alpha^{-1}\beta)
\ee
The contribution of $A_n$-necklaces $f_{A_n} (\beta)$ can be expressed using the formalism of generalized traces introduced in Appendix \ref{sec:trace-bounds}, see Figure \ref{fig:An} for a graphical proof of this fact:
\be
f_{A_n}(\beta) = \trace_{\beta^{-1}\delta}(A_n,A_n^*, \ldots, A_n,A_n^*)
\ee

\begin{figure}[htbp]
\includegraphics{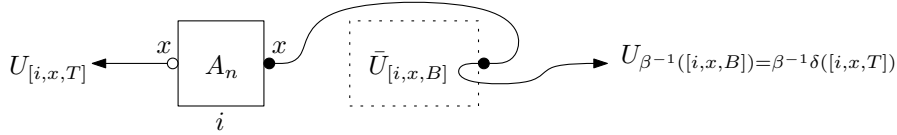}
\caption{The contribution of the necklaces containing the matrices $A_n$ and $A_n^*$ is given by the generalized trace of these matrices with respect to the permutation $\beta^{-1}\delta$.} 
\label{fig:An}
\end{figure}

Note that, in order to have the matrices $A_n$ and $A_n^*$ appearing in the right order, we have to order the triplets $[\cdot, \cdot, \cdot]$ as follows: $[i,x,P] < [j,y,Q]$ if and only if $i<j$ or ($i=j$, $P=T$ and $Q=B$) or ($i=j$, $P=Q$ and $x < y$). One can use  Lemma \ref{thm:bound-generalized-moment} to bound the generalized trace
\be
f_{A_n}(\beta) \leq n^{{\rm dist}(\beta^{-1}\delta,\Theta)} = n^{{\rm dist}(\beta^{-1},\Theta \delta)} 
\ee
To obtain this bound, we used the normalization condition $\|\psi_n\| = \|A_n\|_2 =1$. Recall that the set of permutations $\Theta$ is defined to be
\begin{align*}
\Theta &=  \!\!\!\!\! \bigcup_{\substack{E \sqcup F = \{1, \ldots, p\} \times \{T,B\} \\ |E| = |F| = p}}  \!\!\!\!\!  \left\{\theta \in \S_{2pr} \, : \, \forall e \in E \, \forall x \in \{1, \ldots , r\} , \, \theta([e,x]) = [f,*] \text{ for some } f \in F \right.\\
& \qquad \qquad \qquad \left. \text{ and } \forall f \in F \, \forall x \in \{1, \ldots , r\} , \, \theta([f,x]) = [e,*] \text{ for some } e \in E\right\}
\end{align*}

Hence, using $\#\alpha = 2rp - |\alpha|$ and the Weingarten asymptotic $\Wg(\alpha^{-1}\beta) \sim n^{ -2rp - |\alpha^{-1}\beta| }$, we have
\be \label{eq:power-inequality}
(\text{The power of $n$ in $\Ex \trace[Z_n^p]$}) 
&\leq& 2rp - |\alpha| + {\rm dist}(\beta^{-1}, \Theta\delta) -2rp - |\alpha^{-1}\beta| \\
&\leq&  \min _{\theta \in \Theta} |\beta \theta\delta| -|\beta| \leq \min _{\theta \in \Theta} | \theta \delta| =0 
\ee 
Here, we have used the triangle inequality twice $|\alpha| + |\alpha^{-1}\beta| \geq |\beta|$, $ |\beta \theta\delta| -|\beta| \leq  |\theta\delta|$ and the fact that $\delta \in \Theta$, which is obvious from the definition of $\Theta$, with the choice $E=E_T = \{1, \ldots, p\} \times \{T\}$ and $F=F_B = \{1, \ldots, p\} \times \{B\}$.

In order to find the contributing pairs $(\alpha, \beta)$, we have to investigate the equality cases in the above inequalities. 
For the first triangle inequality $|\alpha| + |\alpha^{-1}\beta| \geq |\beta|$, the bound is saturated if and only if $\alpha$ is on 
the geodesics between $\id$ and $\beta$, denoted by $\id \rightarrow\alpha\rightarrow\beta$.
For the second inequality $\min _{\theta \in \Theta} |\beta \theta\delta| -|\beta| \leq 0$, the equality case reads
\be
\min _{\theta \in \Theta} |\beta \theta \delta| =|\beta| 
\ee
To conclude, we aim at finding all pairs $(\alpha,\beta)$ such that
\begin{align}
\label{eq:triangle-equality-alpha}&\id \rightarrow\alpha\rightarrow\beta\\
\label{eq:triangle-equality-beta}&|\beta \theta \delta| \geq |\beta| \quad \text{for all $\theta \in \Theta$}
\end{align}

Consider a permutation $\sigma = \sigma_T \oplus \sigma_B \in \S_{2rp}$ where $\sigma_{T,B}$ leave invariant the top and bottom elements, i.e. 
\begin{align}
\sigma_T(\{1, \ldots, p\} \times \{1, \ldots, r\} \times \{T\}) &= \{1, \ldots, p\} \times \{1, \ldots, r\} \times \{T\}\\
\sigma_B(\{1, \ldots, p\} \times \{1, \ldots, r\} \times \{B\}) &= \{1, \ldots, p\} \times \{1, \ldots, r\} \times \{B\}
\end{align}
Then, one can check that $\sigma \delta \in \Theta$, for the choice $E=E_T$, $F=F_B$. In other words, for all $\sigma_T, \sigma_B$, we have $ \sigma_T \oplus \sigma_B \in \Theta\delta$. In particular,
\be
\forall i,j,x,y, \quad ([i,x,T],[j,y,T]), ([i,x,B],[j,y,B]) \in \Theta\delta
\ee
Note that we do not exclude the case $i =j$ or $x=y$.

Consider now a transposition $\tau  = ([i,x,T],[j,y,B])$ with $i \neq j$. We have
$$ \tau \delta = ( [i,x,B], [j,y,B], [j,y,T], [i,x,T] )  \prod_{\substack{l,z \\ (l,z) \notin \{(i,x),(j,y)\}}} ( [l,z,T] , [l,z,B] )$$
One can see that $\tau \delta \in \Theta$ for the choice
\begin{align}
E &= \{1, \ldots, p\} \times \{T\} \setminus \{(j,T)\} \sqcup \{(j,B)\} \\
F &= \{1, \ldots, p\} \times \{B\} \setminus \{(j,B)\} \sqcup \{(j,T)\}
\end{align}

We have thus shown that the following transpositions belong to $\Theta\delta$:
\begin{itemize}
\item $ ([i,x,T],[j,y,T]), ([i,x,B],[j,y,B])$, for all $i,j,x,y$;
\item $([i,x,T],[j,y,B])$ for all $i,j,x,y$ such that $i \neq j$.
\end{itemize}
On the other hand, it follows from \cite[Lemma 23.10]{nica-speicher} that $|\sigma \cdot (i,j)| = |\sigma| -1$ for any permutation $\sigma$ where $i,j$ belong to the same cycle. Hence, 
in order for \eqref{eq:triangle-equality-beta} to be satisfied, we see that the following pair of elements can not belong to the same cycle of $\beta$:
\begin{itemize}
\item $[i,x,T]$ and $[j,y,T]$, for all $i,j,x,y$;
\item $[i,x,B]$ and $[j,y,B]$, for all $i,j,x,y$;
\item $[i,x,T]$ and $[j,y,B]$ for all $i,j,x,y$ such that $i \neq j$.
\end{itemize}
Note that $\beta$ can not have cycles of length larger than 2, since in that case at least two of the elements in the cycle would have the same top or bottom index, contradicting one of the first two conditions above. It follows that $\beta$ is a product of disjoint transpositions (swapping top and bottom elements). The final condition above implies that these transpositions should swap elements belonging to the same ``$i$'' group. 
Therefore, the equation \eqref{eq:triangle-equality-beta} finally  implies that $\beta$ should be of the form
\be\label{beta-cand}
\beta = \prod_{i=1}^p \prod_{x \in {\rm dom} (\beta_i)} ([i,x,T], [i, \beta_i (x), B] )
\ee
where $\beta_i \in \hat \S_r$ for $i=1,\ldots, p$. 

Having solved the equation \eqref{eq:triangle-equality-beta}, we move on to finding $\alpha$ satisfying \eqref{eq:triangle-equality-alpha}. Since $\beta$ is a product of disjoint transpositions \eqref{beta-cand}, $\alpha$ should be constructed from a subset of the transpositions appearing in $\beta$. It follows that
\be\label{alpha-cand}
\alpha = \prod_{i=1}^p \prod_{x \in {\rm dom} (\alpha_i)} ([i,x,T], [i, \alpha_i (x), B] )
\ee
for partial permutations $\alpha_i \in \hat \S_r$ satisfying $\alpha_i \leq \beta_i$. 
Plugging the values for $\alpha$ and $\beta$ in equation \eqref{eq:EtrZp}, we obtain
\be
\Ex \trace Z_n^p
&=&(1+ O(n^{-2}))   \sum_{\alpha, \beta \text{ as in } \eqref{alpha-cand}, \eqref{beta-cand}} 
k^{\#(\alpha^{-1}\gamma)} 
\frac{f_{A_n}(\beta)}{n^{{\rm dist} (\beta^{-1},\Theta\delta)}} 
k^{-2rp -|\alpha^{-1}\beta|} (-1)^{|\alpha^{-1}\beta|}
\ee
The error term $1+O(n^{-2})$ comes from the Weingarten asymptotic \eqref{eq:Wg-asympt} and from the fact that the expression $2rp - |\alpha| + {\rm dist}(\beta^{-1}, \Theta\delta) -2rp - |\alpha^{-1}\beta|$ appearing in \eqref{eq:power-inequality} has a constant parity as a function of $(\alpha,\beta)$. Indeed, this follows from the fact that the $|\sigma \tau| = |\sigma| \pm 1$ for any permutation $\sigma$ and transposition $\tau$. Since both $\alpha$ and $\beta$ appear twice in the expression we are investigating, the parity conservation property follows.

We are going to further simplify the formula for $\Ex \trace Z_n^p$ by using the fact that $|\alpha^{-1}\beta| = \sum_i |\mathrm{dom}(\beta_i)| - |\mathrm{dom}(\alpha_i)|$ and, for $\beta$ as in \eqref{beta-cand}, 
$$f_{A_n}(\beta) = \prod_{i=1}^p \langle \psi_n | T_{\beta_i}^{(n)}  | \psi_n \rangle$$
Furthermore, note that for such $\beta$, ${\rm dist} (\beta^{-1},\Theta\delta) = \sum_{i=1}^p |{\rm dom} (\beta_i)|$, since with non trivial partial permutations $\beta_i$, $\beta\delta$ is not an element of $\Theta$, so in order to correct $\beta$ in such a way that $\beta \delta \in \Theta$, one has to undo the wires appearing in ${\rm dom} (\beta_i)$, for all $i$. With all the above ingredients, we obtain the announced formula for the $p$-th moment of the output matrix $Z_n$:
\be
\Ex \trace Z_n^p&=& (1+ O(n^{-2}))   \sum_{\alpha_i \leq \beta_i} 
k^{-|\alpha^{-1}\gamma|} \prod_{i=1}^p 
\left[ 
\frac{\langle \psi_n | T_{\beta_i}^{(n)}  | \psi_n \rangle}{n^{|{\rm dom}(\beta_i)|}}  
\left(-k^{-1}\right)^{|{\rm dom}(\beta_i)|-|{\rm dom}(\alpha_i)|}
\right]
\ee 
 
\end{proof}  

Since the matrix $Z_n$ is living in a space of fixed dimension, convergence of moments can be easily translated into the convergence of the random matrix itself. We define for $\beta \in \hat \S_r$
\be\label{operator:R}
R_\beta^{(k)} &=& \bigotimes _{x \in {\rm dom} (\beta)} [B_{x,\beta (x)}^{(k)}  - k^{-1} I] \otimes I \\
&=& \left[\sum_{A \subseteq \mathrm{dom}(\beta)} \left(\bigotimes _{x \in A} B_{x,\beta (x)}^{(k)}\right)  \otimes (- k^{-1})^{|{\rm dom}(\beta)|-|A|}I \right] \otimes I
 \notag\\
&=& \sum_{\alpha \leq \beta} T_\alpha^{(k)}  \cdot (-k^{-1})^{|{\rm dom}(\beta)|-|{\rm dom} (\alpha)|} \notag
\ee
where the second equality follows from the binomial formula. 

\begin{theorem}\label{thm:main-expectation}  
Consider a sequence of input states $\psi_n$. Then
$Z_n$ has the following expectation: 
\be
\Ex  Z_n 
=  (1+ O(n^{-2}))k^{-2r}  \sum_{\beta \in \hat S_r}  
\langle \psi_n | \tilde T_{\beta}^{(n)}  | \psi_n \rangle  R_\beta^{(k)},  
\ee
where the $O(n^{-2})$ error appears in each entry of the matrices
and $R_\beta^{(k)}$ is defined in \eqref{operator:R}.
\end{theorem}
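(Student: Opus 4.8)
The plan is to deduce the statement about $\Ex Z_n$ directly from the moment formula of Theorem \ref{thm:main-moments}, together with the observation that $Z_n$ lives in the fixed-dimensional space $M_{k^{2r}}(\C)$. First I would extract the case $p=1$ of Theorem \ref{thm:main-moments}: there $\gamma$ restricted to a single copy is trivial (a product of fixed points), $\alpha$ ranges over permutations induced by a single partial permutation $\alpha_1 \le \beta_1$, and $|\alpha^{-1}\gamma|$ becomes $|\alpha| = |\mathrm{dom}(\alpha_1)|$ up to a shift; more precisely one checks $\#(\alpha^{-1}\gamma) = 2r - |\mathrm{dom}(\alpha_1)|$ so that $k^{-|\alpha^{-1}\gamma|} = k^{-2r} k^{|\mathrm{dom}(\alpha_1)|}$. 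Actually, since we want $\Ex Z_n$ as a matrix and not merely its trace, I would instead run the graphical Weingarten argument of Theorem \ref{thm:main-moments} verbatim but with $p=1$ and with the final trace-closing wire removed, so that the diagram computes the matrix $Z_n \in M_{k^{2r}}(\C)$ rather than $\trace Z_n$. The same geodesic/triangle-inequality analysis identifies the surviving pairs $(\alpha,\beta)$ as those coming from partial permutations $\alpha_1 \le \beta_1 \in \hat\S_r$, and the only change is that the $k$-loop count is replaced by a ``$k$-labeled open diagram'' which, instead of contributing a scalar power of $k$, contributes the operator $T_{\alpha_1}^{(k)}$ acting on the output space $(\C^k)^{\otimes r}\otimes(\C^k)^{\otimes r}$ (the environment spaces being the ones that close into loops).

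The key computation is then to collect, for a fixed $\beta \in \hat\S_r$, all contributions with $\alpha \le \beta$:
\be
\Ex Z_n = (1 + O(n^{-2})) k^{-2r} \sum_{\beta \in \hat\S_r} \langle \psi_n | \tilde T_\beta^{(n)} | \psi_n\rangle \sum_{\alpha \le \beta} T_\alpha^{(k)} (-k^{-1})^{|\mathrm{dom}(\beta)| - |\mathrm{dom}(\alpha)|},
\ee
and to recognize the inner sum over $\alpha \le \beta$ as exactly $R_\beta^{(k)}$ by the third displayed equality in the definition \eqref{operator:R}. The power $k^{-2r}$ comes from the $2r$ environment loops (one per tensor leg of $\Phi^C$ and $\bar\Phi^C$), and the extra factor $k^{|\mathrm{dom}(\alpha_1)|}$ one might worry about is absorbed precisely into the normalization that turns $B^{(k)}$ into the un-normalized Bell states appearing in $T_\alpha^{(k)}$; bookkeeping the powers of $k$ carefully so they match $R_\beta^{(k)}$ (un-normalized $B^{(k)}$ in $T_\alpha^{(k)}$, but $-k^{-1}$ coefficients) is the one place where sign and normalization conventions must be tracked with care. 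The $O(n^{-2})$ error is inherited from the Weingarten asymptotics \eqref{eq:Wg-asympt} exactly as in Theorem \ref{thm:main-moments}, using the same parity argument: the exponent of $n$ controlling each diagram's contribution has constant parity in $(\alpha,\beta)$, so the first correction is $O(n^{-2})$ rather than $O(n^{-1})$, and since the ambient matrix space is finite-dimensional this error is uniform over all entries of $\Ex Z_n$.

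The main obstacle I anticipate is not conceptual but organizational: one must re-derive the ``open'' version of the diagrammatic identity — i.e., argue that removing the tracial closure and reading off a matrix in $M_{k^{2r}}(\C)$ produces precisely the operator $\sum_{\alpha\le\beta} T_\alpha^{(k)}(-k^{-1})^{\cdots}$ on the output legs — and check that the roles of ``output'' versus ``environment'' spaces are assigned consistently with the earlier convention that $\Phi^C$ traces out $\C^n$ and keeps $\C^k$. Concretely, the subtlety is that in $\trace Z_n^p$ every index space eventually closes into a loop giving a scalar, whereas for $\Ex Z_n$ the $2r$ output spaces $\C^k$ remain as free legs, and one has to verify that the $\alpha$-wires among the $U$/$\bar U$ boxes translate, on these free legs, into exactly the Bell-state pattern $T_{\alpha_1}^{(k)}$ rather than some transposed or permuted variant. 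Once this identification is pinned down, substituting into the $p=1$ analysis and invoking \eqref{operator:R} finishes the proof; convergence of the matrix itself (not just of moments) is automatic because $\dim M_{k^{2r}}(\C)$ is fixed, as already noted in the text preceding the theorem.
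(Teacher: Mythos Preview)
Your proposal is correct and matches the paper's proof essentially verbatim: the paper runs the graphical Weingarten expansion for $p=1$ with the tracial closure removed, so that the scalar factor $k^{\#(\alpha^{-1}\gamma)}$ is replaced by the operator $T_\alpha^{(k)}$, observes that the surviving $(\alpha,\beta)$ pairs are the same as in Theorem~\ref{thm:main-moments} since $k$ is fixed, and then recognizes the inner sum over $\alpha\le\beta$ as $R_\beta^{(k)}$ via \eqref{operator:R}. Your discussion of the bookkeeping (powers of $k$, the $O(n^{-2})$ parity argument, and the identification of the open $k$-legs with $T_{\alpha_1}^{(k)}$) is more explicit than the paper's terse treatment, but the argument is the same.
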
 
\begin{proof}
As in the previous theorem, we shall use the graphical Weingarten calculus, with the major difference that this time, we are averaging operators, and not scalars. Moreover, we are in the simplest possible case, $p=1$. By replacing $k^{\#(\alpha^{-1}\gamma)}$ by the $k^{2r}\times k^{2r}$ matrices $T^{(k)}_\alpha$ (see \eqref{operator:T}) in equation \eqref{eq:EtrZp}, we obtain
\be
\Ex Z_n = \sum_{\alpha,\beta \in S_{2r}} 
n^{\#\alpha} f_{A_n} (\beta) \Wg(\alpha^{-1}\beta) \cdot T^{(k)}_\alpha
\ee
Since $k$ is fixed, the terms which survive asymptotically are the same as before, so we conclude 
\begin{align*}
\Ex  Z_n
&= (1+ O(n^{-2}))  k^{-2r}  \sum_{\alpha \leq \beta} 
\left[ 
\langle \psi_n | \tilde T_{\beta}^{(n)}  | \psi_n \rangle 
\left(-k^{-1}\right)^{|{\rm dom}(\beta)|-|{\rm dom}(\alpha)|}
\right] T_\alpha^{(k)} \\
&=  (1+ O(n^{-2}))   k^{-2r}  \sum_{\beta \in \hat S_r}  
\langle \psi_n | \tilde T_{\beta}^{(n)}  | \psi_n \rangle  R_\beta^{(k)}  
\end{align*}
\end{proof}

In order to be able to state almost-sure convergence results for the random matrix $Z_n$, we have to make the following assumption on the behavior of the input sequence $\psi_n$.

\medskip
\noindent \textbf{Assumption on input vectors:} a sequence of vectors $(\psi_n)_n$ is called \emph{well behaved} if for all partial permutations $\beta \in \hat{\mathcal S}_r$, one has
\begin{equation}\label{eq:assumption-moments}
\lim_{n \to \infty} \langle \psi_n | \tilde T_{\beta}^{(n)} | \psi_n \rangle = a_\beta \in[0,1].
\end{equation}
\medskip

The set of numbers $(a_\beta)_{\beta \in \hat \S_r}$ will be treated as parameters in our model from now on. Obviously, one has the normalization condition $a_\emptyset = 1$. The following theorem is the main result of this section.

\begin{theorem}\label{thm:main} 
Given a sequence of well-behaved inputs, 
the output matrices $Z_n $ converge almost surely to 
\be\label{formula:Z}
Z=k^{-2r}  \sum_{\beta \in \hat S_r}  
a_\beta R_\beta^{(k)} 
\ee 
\end{theorem}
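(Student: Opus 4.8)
The plan is to upgrade the moment computation of Theorem~\ref{thm:main-moments} and the expectation computation of Theorem~\ref{thm:main-expectation} into an almost-sure statement, using the fact that $Z_n$ lives in the fixed-dimensional space $M_{k^{2r}}(\C)$. First, I would observe that since all the matrices $\tilde T^{(n)}_\beta$ are projections with $0 \leq \langle \psi_n|\tilde T^{(n)}_\beta|\psi_n\rangle \leq 1$ and $\|\psi_n\|=1$, the moment formula of Theorem~\ref{thm:main-moments} gives a uniform-in-$n$ bound $\Ex\trace Z_n^p = O_p(1)$; together with $\Ex\trace Z_n = 1$ (take $p=1$ and note $Z_n$ is a state up to normalization, or read it off the formula), this shows the random variables $\trace Z_n^p$ are tight. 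Using the well-behavedness assumption \eqref{eq:assumption-moments}, we may pass to the limit $n\to\infty$ term by term in the finite sum of Theorem~\ref{thm:main-moments}: the factors $k^{-|\alpha^{-1}\gamma|}$ and $(-k^{-1})^{|{\rm dom}\beta_i|-|{\rm dom}\alpha_i|}$ are constants, the $(1+O(n^{-2}))$ prefactor tends to $1$, and each $\langle\psi_n|\tilde T^{(n)}_{\beta_i}|\psi_n\rangle \to a_{\beta_i}$, so $\lim_n \Ex\trace Z_n^p$ exists and equals a fixed number $m_p$.

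Second, I would identify this limit $m_p$ with $\trace Z^p$, where $Z$ is the matrix in \eqref{formula:Z}. The cleanest way is to note that the combinatorial structure of the sum in Theorem~\ref{thm:main-moments} is exactly the one obtained by expanding $\trace\bigl[(k^{-2r}\sum_\beta a_\beta R^{(k)}_\beta)^p\bigr]$: the permutation $\gamma$ encodes the tracial wiring, the constraint $\alpha_i\leq\beta_i$ and the signs $(-k^{-1})^{|{\rm dom}\beta_i|-|{\rm dom}\alpha_i|}$ reproduce exactly the binomial expansion \eqref{operator:R} of each $R^{(k)}_{\beta_i}$, and the power $k^{-|\alpha^{-1}\gamma|}$ counts the loops of the $\C^k$-wires in the resulting diagram. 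Alternatively, and perhaps more transparently, I would first prove the $p=1$ analogue at the level of the full matrix, i.e.\ that $\Ex Z_n \to Z$; this is immediate from Theorem~\ref{thm:main-expectation} and \eqref{eq:assumption-moments}. Then I would promote this to higher moments by the same graphical argument applied to $\trace Z_n^p$, matching the asymptotic contributions diagram-by-diagram against the expansion of $\trace Z^p$.

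Third comes the concentration/almost-sure step. Since $\dim M_{k^{2r}}(\C)$ is fixed, it suffices to control the fluctuations of finitely many entries (or of $\trace Z_n^p$ for $p=1,\dots,k^{2r}$). I would compute the variance $\Ex|\trace Z_n^p|^2 - |\Ex\trace Z_n^p|^2$ by the same Weingarten expansion applied to the diagram for $\trace Z_n^p$ traced against its conjugate; the leading (order $n^0$) terms cancel because the ``disconnected'' pairings dominate and factorize, leaving $\mathrm{Var}(\trace Z_n^p) = O(n^{-2})$. Then $\sum_n \mathrm{Var}(\trace Z_n^p) < \infty$, so by Chebyshev and Borel--Cantelli, $\trace Z_n^p \to m_p = \trace Z^p$ almost surely for each $p$. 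Since a matrix in a fixed-dimensional space is determined by finitely many moments (or since convergence of all moments of a bounded sequence of Hermitian matrices in fixed dimension forces convergence of the matrices up to unitary conjugation, and the expectation computation pins down the limit), and since $Z_n \geq 0$ with bounded operator norm, almost-sure convergence of the moments upgrades to almost-sure convergence $Z_n \to Z$ in $M_{k^{2r}}(\C)$.

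The main obstacle I expect is the variance estimate: one must verify that in the Weingarten expansion of $\Ex|\trace Z_n^p|^2$, all the genuinely ``connected'' contributions between the $Z_n$-block and the $\bar Z_n$-block are suppressed by at least $n^{-2}$, using again the triangle-inequality and geodesic arguments from the proof of Theorem~\ref{thm:main-moments} (now on $\S_{4rp}$), together with the parity argument that rules out an $n^{-1}$ correction. The bookkeeping is heavier than in the first-moment case, but it is structurally the same computation; the finite dimensionality of the target space is what makes this suffice.
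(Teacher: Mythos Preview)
Your overall strategy (Borel--Cantelli via a summable variance bound, exploiting the fixed dimension of the output space) is correct and is what the paper does. However, there is a genuine gap in your primary route and an unnecessary detour compared to the paper's argument.

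\textbf{The gap.} Almost-sure convergence of the scalars $\trace Z_n^p \to \trace Z^p$ for $p=1,\ldots,k^{2r}$ only gives convergence of the \emph{spectrum} of $Z_n$ to that of $Z$; it does not yield $Z_n\to Z$ as matrices. Your remark that ``the expectation computation pins down the limit'' does not close this: knowing $\Ex Z_n\to Z$ and knowing the eigenvalues of $Z_n$ converge to those of $Z$ still does not force $Z_n\to Z$ almost surely (a random unitary conjugate of $Z$ would satisfy both). Your parenthetical alternative --- controlling finitely many \emph{entries} --- would work, but then you are no longer doing a moment argument at all.

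\textbf{The paper's simpler route.} Rather than bounding $\mathrm{Var}(\trace Z_n^p)$ for each $p$ (which, as you note, requires a Weingarten expansion on $\S_{4rp}$), the paper bounds the Hilbert--Schmidt variance of $Z_n$ directly:
\[
\Ex\|Z_n-\Ex Z_n\|_2^2 \;=\; \Ex\trace(Z_n^2)\;-\;\trace\bigl[(\Ex Z_n)^2\bigr].
\]
The first term is precisely the $p=2$ case of Theorem~\ref{thm:main-moments}; the second comes from squaring the formula in Theorem~\ref{thm:main-expectation}. One checks that the two sums coincide term by term (the identity $k^{-4r}\trace[T^{(k)}_{\alpha_1}T^{(k)}_{\alpha_2}]=k^{-|\alpha^{-1}\gamma|}$ matches the loop count), so their difference is the $O(n^{-2})$ correction already present in both theorems. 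Thus $\sum_n \Ex\|Z_n-\Ex Z_n\|_2^2<\infty$, Borel--Cantelli gives $\|Z_n-\Ex Z_n\|_2\to 0$ a.s., and since $\Ex Z_n\to Z$ deterministically (Theorem~\ref{thm:main-expectation} plus \eqref{eq:assumption-moments}), one concludes $Z_n\to Z$ a.s.\ directly, with no passage through higher moments and no spectral-versus-matrix issue.
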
 
\begin{proof}
We shall use the Hilbert-Schmidt (or the 2-Schatten) norm to prove the convergence.
Note however that all norms are equivalent on the finite-dimensional space $M_{k^{2r}}(\C)$.

It suffices to show that
\be
\sum_{n=1}^\infty \Ex \|Z_n - \Ex Z_n\|_2^2 < \infty
\ee
and then apply the Borel-Cantelli lemma to conclude that
$ \|Z_n - \Ex Z_n\|_2$ converges to zero almost surely. We have 
$$ \Ex \|Z_n - \Ex Z_n\|_2^2 = \Ex \trace(Z_n^2) -  \trace[(\Ex Z_n)^2]$$

The first term above has been computed in Theorem \ref{thm:main-moments}, for $p=2$ (recall that the permutation $\alpha$ is defined in \eqref{eq:def-alpha})
\be
\Ex \trace Z_n^2
=(1+ O(n^{-2}))   \sum_{\substack{\alpha_1 \leq \beta_1 \\ \alpha_2 \leq \beta_2}} 
k^{-|\alpha^{-1}\gamma|} \prod_{i=1}^2 
\left[ 
\langle \psi_n | \tilde T_{\beta_i}^{(n)}  | \psi_n \rangle 
\left(-k^{-1}\right)^{|{\rm dom}(\beta_i)|-|{\rm dom}(\alpha_i)|}
\right]
\ee 
The second term can be easily computed from Theorem \ref{thm:main-expectation}
\begin{align*}
\trace (\Ex Z_n)^2 
&= \trace  \left[(1+ O(n^{-2}))  
\left(k^{-2r}  \sum_{\alpha \leq \beta \in \hat \S_r} 
\left[ 
\langle \psi_n | \tilde T_{\beta}^{(n)}  | \psi_n \rangle 
\left(-k^{-1}\right)^{|{\rm dom}(\beta)|-|{\rm dom}(\alpha)|}
\right] T_\alpha^{(k)} \right)^2 \right]\\
&= (1+ O(n^{-2}))  
\sum_{\substack{\alpha_1 \leq \beta_1 \\ \alpha_2 \leq \beta_2}} k^{-4r}\trace \left[ T_{\alpha_1}^{(k)} T_{\alpha_2}^{(k)} \right]
 \prod_{i=1}^2
\left[ 
\langle \psi_n | \tilde T_{\beta_i}^{(n)}  | \psi_n \rangle 
\left(-k^{-1}\right)^{|{\rm dom}(\beta_i)|-|{\rm dom}(\alpha_i)|}
\right]
\end{align*}
Note that one can compute
\be
k^{-4r}\trace \left[ T_{\alpha_1}T_{\alpha_2}\right] = k^{-4r + \# (\alpha^{-1} \gamma)} = k^{-|\alpha^{-1} \gamma|}
\ee
where $\alpha  \in \S_{4r}$ is defined as before. We thus get the estimate which allows us to conlcude:
\be
\Ex \|Z_n - \Ex Z_n\|_2^2  =\Ex \trace(Z_n^2) -  \trace[(\Ex Z_n)^2] = O(n^{-2})
\ee 
\end{proof}

\section{Optimality of products of Bell states}\label{sec:Bell-optimal}

In this section, we show that among the well-behaved input states satisfying assumption \eqref{eq:assumption-moments}, the ones having minimal output entropy for generic random channels are tensor products of Bell (or maximally entangled) states. In particular, we recover results from \cite{cfn1} in the case $r=1$.

First, the conclusion of our main result from Section \ref{sec:main}, Theorem \ref{thm:main}, can be reformulated as follows. 

\begin{proposition}\label{prop:Z-conv-comb}
For a fixed sequence of well-behaved input states, the output matrix $Z_n$ converges almost surely to the matrix
\begin{equation}\label{eq:decomposition-Z-alpha}
Z=\sum_{\alpha \in \hat{\mathcal S}_r} p_\alpha Z_\alpha^{(k)},
\end{equation}
where $p_\alpha$ are positive numbers that sum up to one and
\begin{align}\label{eq:def-Z-alpha}
Z_\alpha^{(k)} 
&= \left[ \bigotimes _{x \in {\rm dom}(\alpha)} C_{x, \alpha(x)}\right] \otimes \hat I 
\end{align}
Here, $\hat I$ is the identity operator normalized to have unit trace and 
$$C_{x,y} = k^{-2}B_{x,y}^{(k)} + (k^{-2} - k^{-3}) I$$
\end{proposition}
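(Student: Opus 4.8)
The plan is to start from the explicit formula \eqref{formula:Z} for the limit output matrix, $Z = k^{-2r}\sum_{\beta\in\hat{\mathcal S}_r} a_\beta R_\beta^{(k)}$, and rewrite it as a convex combination of the operators $Z_\alpha^{(k)}$ by a change of basis in the poset $(\hat{\mathcal S}_r,\leq)$. The key algebraic observation is that the operators $R_\beta^{(k)}$ defined in \eqref{operator:R} and the operators $Q_\alpha^{(n)}$ from \eqref{eq:def-Q-alpha} are, up to normalization, related by Möbius inversion on the same poset: $R_\beta^{(k)} = \sum_{\alpha\leq\beta} T_\alpha^{(k)}(-k^{-1})^{|\mathrm{dom}\beta|-|\mathrm{dom}\alpha|}$, which inverts to $T_\alpha^{(k)} = \sum_{\beta\geq\alpha} R_\beta^{(k)} k^{-(|\mathrm{dom}\beta|-|\mathrm{dom}\alpha|)}\cdot(\pm)$. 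First I would substitute this inversion into $Z$ to express $Z$ as a linear combination $\sum_\alpha c_\alpha Z_\alpha^{(k)}$, where each $Z_\alpha^{(k)}$ is exactly the operator whose first tensor factors are the $C_{x,\alpha(x)}$; one checks directly from the definition of $C_{x,y}$ that $\bigotimes_{x\in\mathrm{dom}(\alpha)} C_{x,\alpha(x)}\otimes\hat I$ is what one gets by collecting, for a fixed $\alpha$, the relevant portions of the $R_\beta^{(k)}$ with $\beta\geq\alpha$. Equivalently, one can recognize that $C_{x,y} = k^{-2}B_{x,y}^{(k)} + (k^{-2}-k^{-3})I$ is (up to the overall factor and the identity block) precisely the local building block $\tilde B_{x,y}^{(k)} - \tfrac1k(I-\tilde B_{x,y}^{(k)})\cdot(\text{something})$ appearing in $Q_\alpha^{(k)}$; so $Z_\alpha^{(k)}$ is a rescaled copy of $Q_\alpha^{(k)}$.

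Second, I would identify the coefficients $p_\alpha$. After the change of basis the coefficient of $Z_\alpha^{(k)}$ will come out as a $\pm$-alternating sum of the $a_\beta$ over $\beta\geq\alpha$, and the claim is that this equals $\langle\psi\,|\,Q_\alpha^{(n)}\,|\,\psi\rangle$ in the limit — more precisely, writing $p_\alpha = \lim_n \langle\psi_n|Q_\alpha^{(n)}|\psi_n\rangle$ using the well-behavedness assumption \eqref{eq:assumption-moments} to pass the limit through the finite sum. Then positivity of $p_\alpha$ is exactly Corollary~\ref{cor:Q-beta-asympt-positive}: the spectrum of $Q_\alpha^{(n)}$ is $O(1/n)$-close to $\{0,1\}$, so $\langle\psi_n|Q_\alpha^{(n)}|\psi_n\rangle \geq -O(1/n)$, hence the limit $p_\alpha$ is nonnegative. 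That $\sum_\alpha p_\alpha = 1$ follows by taking traces: $\trace Z = 1$ since $Z$ is the limit of density matrices, and $\trace Z_\alpha^{(k)} = 1$ for every $\alpha$ (because $C_{x,y}$ has trace... one checks $\trace C_{x,y} = k^{-2}\cdot k + (k^{-2}-k^{-3})k^2 = k^{-1} + 1 - k^{-1} = 1$, wait — rather $\trace$ over the relevant factors gives a normalized operator, matching $\hat I$); so $1 = \trace Z = \sum_\alpha p_\alpha\trace Z_\alpha^{(k)} = \sum_\alpha p_\alpha$. Alternatively $\sum_\alpha p_\alpha = \lim_n\langle\psi_n|\sum_\alpha Q_\alpha^{(n)}|\psi_n\rangle$ and one shows $\sum_\alpha Q_\alpha^{(n)} = I$ by Möbius inversion on the poset (the defining sum in \eqref{eq:def-Q-alpha} telescopes).

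The main obstacle I anticipate is bookkeeping the combinatorics of the double Möbius inversion cleanly: one inversion turns $a_\beta$'s into the coefficients, the structure of $R_\beta^{(k)}$ already encodes one inclusion-exclusion via the binomial expansion in \eqref{operator:R}, and matching the result factor-by-factor against $\bigotimes C_{x,\alpha(x)}\otimes\hat I$ requires being careful that the "free" bottom spaces (those not in the image of $\alpha$) and the "free" top spaces (those not in $\mathrm{dom}\alpha$) assemble correctly into the single normalized identity $\hat I$ rather than into separate identity blocks with wrong normalization. The cleanest route is probably to avoid computing coefficients by hand: observe $k^{2r}Z_\alpha^{(k)}$ is exactly $Q_\alpha^{(k)}$ rescaled, use \eqref{eq:decomp-Q-alpha} to see $Z_\alpha^{(k)}$ factors as rank-one-projectors tensored with a normalized copy of $Q_\emptyset^{(k)}$, and then everything reduces — via the identity $\sum_{\beta\geq\alpha}\tilde T_\beta^{(k)}(-1)^{|\mathrm{dom}\beta|-|\mathrm{dom}\alpha|} = Q_\alpha^{(k)}$ compared with \eqref{operator:R} — to the single scalar identity $C_{x,y} = k^{-2}B_{x,y}^{(k)}+(k^{-2}-k^{-3})I$, which is checked directly. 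With that structural identification in hand, positivity and the trace-one normalization are immediate from Corollary~\ref{cor:Q-beta-asympt-positive} and linearity of the trace, respectively.
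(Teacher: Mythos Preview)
Your overall strategy matches the paper's proof exactly: start from \eqref{formula:Z}, perform a M\"obius-type change of basis on the poset $(\hat{\mathcal S}_r,\leq)$ to rewrite $Z$ as $\sum_\alpha p_\alpha Z_\alpha^{(k)}$, identify $p_\alpha = \lim_n \langle\psi_n|Q_\alpha^{(n)}|\psi_n\rangle$, and invoke Corollary~\ref{cor:Q-beta-asympt-positive} for nonnegativity. The paper carries out the inversion slightly differently than you sketch: it first observes $Z_\beta^{(k)} = k^{-2r}\sum_{\alpha\leq\beta} R_\alpha^{(k)}$ (a one-line binomial expansion of $C_{x,y} = k^{-2}[(B^{(k)}_{x,y}-k^{-1}I)+I]$), then M\"obius-inverts \emph{that} relation to get $R_\beta^{(k)} = k^{2r}\sum_{\alpha\leq\beta}(-1)^{|\mathrm{dom}\beta|-|\mathrm{dom}\alpha|}Z_\alpha^{(k)}$, and substitutes. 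Your proposed route through inverting the $R\leftrightarrow T$ relation would work too but is one step longer.

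One point to correct: your ``cleanest route'' claim that $k^{2r}Z_\alpha^{(k)}$ is a rescaled copy of $Q_\alpha^{(k)}$ is false. Already for $\alpha=\emptyset$ one has $k^{2r}Z_\emptyset^{(k)} = I$, whereas $Q_\emptyset^{(k)}$ is (asymptotically) a nontrivial projection with spectrum $\{0,1\}$. The operators $Z_\alpha^{(k)}$ live on the output side (dimension $k$) and are built from the $C_{x,y}$; the operators $Q_\alpha^{(n)}$ live on the input side (dimension $n$) and are built from alternating sums of $\tilde T_\beta^{(n)}$. They play dual roles in the argument---one gives the density matrices in the decomposition, the other gives the coefficients---but they are not the same object. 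Drop that shortcut and stick with your main plan; it is correct. Your verification that $\sum_\alpha p_\alpha = 1$ via $\sum_\alpha Q_\alpha^{(n)} = I$ (which does telescope as you say) is a nice addition the paper leaves implicit.
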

\begin{proof}
One can rewrite $Z_\beta^{(k)}$ as
\begin{align*}
Z_\beta^{(k)} 
&=k^{-2r} \bigotimes _{x \in {\rm dom} (\beta)} 
\left[ \left(B_{x, \beta(x)}^{(k)} -k^{-1} I \right) + I \right] \otimes I \\  
&=k^{-2r} \sum_{\alpha \leq \beta} R_\alpha^{(k)}
\end{align*}
Here, the last equality comes from \eqref{operator:R}. 
Then, using the M\"obius inversion formula for the poset $\hat{\mathcal S}_r$, 
we can express the $R$ matrices in terms of the $Z$ matrices:
$$R_\beta^{(k)} = k^{2r} \sum_{\alpha \leq \beta} (-1)^{|{\rm dom}(\beta)|-|{\rm dom}(\alpha)|}Z_\alpha^{(k)}$$

Plugging this expression into the limiting matrix formula in \eqref{formula:Z},  
we obtain
\begin{align*}
Z=\sum_{\alpha \leq \beta} (-1)^{|{\rm dom}(\beta)|-|{\rm dom}(\alpha)|} a_\beta Z_\alpha^{(k)}
= \sum_{\alpha \in \hat{\mathcal S}_r} p_\alpha Z_\alpha^{(k)},
\end{align*}
where 
$$p_\alpha = \sum_{\beta \geq \alpha} (-1)^{|{\rm dom}(\beta)|-|{\rm dom}(\alpha)|} a_\beta$$
All there is left to show is the positivity of the real numbers $p_\alpha$. Using the definition \eqref{eq:assumption-moments} of the coefficients $a_\beta$, we write
$$p_\alpha = \lim_{n \to \infty} \langle \psi_n | Q_\alpha^{(n)} | \psi_n \rangle,$$
where 
$$ Q_\alpha^{(n)} = \sum_{\beta \geq \alpha} (-1)^{|{\rm dom}\beta|-|{\rm dom}\alpha|} \tilde T_{\beta}^{(n)}$$
Since it was shown in Corollary \ref{cor:Q-beta-asympt-positive} that the operators $Q_\beta^{(n)}$ are asymptotically positive, the positivity of the $p_\beta$ follows and the proof is complete.

\end{proof}

The entropy of the density matrices $Z_\alpha^{(k)}$ are easily computed ($C$ is any of the matrices $C_{x,y}$):
\begin{equation}\label{eq:entropy-Z-alpha}
H(Z_\alpha^{(k)}) = |\mathrm{dom}(\alpha)| H(C) + (r-|\mathrm{dom}(\alpha)|) \log(k^2),
\end{equation}
Note that $H(C)$ is a constant depending on $k$, 
\begin{align}\label{eq:entropy-hc}
H(C) &= -(k^{-1} + k^{-2} - k^{-3}) \log (k^{-1} + k^{-2} - k^{-3}) - (k^2-1) (k^{-2} - k^{-3}) \log (k^{-2} - k^{-3}) \\
&= h(k^{-1} + k^{-2} - k^{-3}) + (k^2-1) h( k^{-2} - k^{-3})\\
& < \log(k^2),
\end{align}
where $h(x)  = -x \log x$. It follows that the entropy of  $Z_\alpha^{(k)}$ is a strictly decreasing function of $ |\mathrm{dom} (\alpha)|$. We now state the main result of the current section.

\begin{theorem}\label{thm:product-inputs}
Among all sequences of well-behaved input states, the ones having a minimal output entropy are the ones having parameters
$$a_\beta= 1_{\beta \leq \pi},$$
for some (full) permutation $\pi \in \mathcal S_r$. In this case, the input state is (asymptotically) a tensor product of Bell states (where the matching $\Phi \leftrightarrow \bar \Phi$ of the conjugate channels is given by $\pi$), the output state is $Z = Z_\pi$ and the output entropy is 
$$H(Z) = r H(C)$$
\end{theorem}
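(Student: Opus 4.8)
The plan is to combine the convex-decomposition of the limiting output matrix obtained in Proposition~\ref{prop:Z-conv-comb} with the monotonicity of the entropy function $H(Z_\alpha^{(k)})$ in $|\mathrm{dom}(\alpha)|$ established in \eqref{eq:entropy-Z-alpha}. Concretely, by Theorem~\ref{thm:main} and Proposition~\ref{prop:Z-conv-comb} the output state converges almost surely to $Z = \sum_{\alpha \in \hat{\mathcal S}_r} p_\alpha Z_\alpha^{(k)}$ with $p_\alpha \geq 0$ and $\sum_\alpha p_\alpha = 1$. First I would invoke concavity of the von Neumann entropy to write
\[
H(Z) \geq \sum_{\alpha \in \hat{\mathcal S}_r} p_\alpha H(Z_\alpha^{(k)}),
\]
and then use \eqref{eq:entropy-Z-alpha} together with the strict inequality $H(C) < \log(k^2)$ from \eqref{eq:entropy-hc} to bound each term from below by $r H(C)$, since $|\mathrm{dom}(\alpha)| \leq r$ and the function $|\mathrm{dom}(\alpha)| \mapsto H(Z_\alpha^{(k)})$ is strictly decreasing, reaching its minimum $rH(C)$ exactly when $|\mathrm{dom}(\alpha)| = r$, i.e. when $\alpha$ is a full permutation. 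This yields $H(Z) \geq r H(C)$ for every well-behaved input sequence.

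Next I would analyze the equality case. The bound $H(Z) = rH(C)$ forces, first, that $p_\alpha = 0$ unless $\alpha$ is a full permutation (otherwise $\sum_\alpha p_\alpha H(Z_\alpha^{(k)}) > rH(C)$ strictly), and second, that concavity of $H$ is saturated. For a full permutation $\pi \in \mathcal S_r$, the state $Z_\pi^{(k)} = \bigotimes_{x=1}^r C_{x,\pi(x)}$ has eigenvalues depending only on $k$ and not on $\pi$; the distinct $Z_\pi$ are mutually orthogonally-supported? — not quite, but one checks that equality in concavity of $H$ across a convex combination of density matrices holds iff all matrices with nonzero weight are equal (this is the standard strict-concavity statement for von Neumann entropy, since $H$ is strictly concave on density matrices). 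Hence exactly one $p_\pi = 1$ and all others vanish. Translating back through the Möbius relation $p_\alpha = \sum_{\beta \geq \alpha}(-1)^{|\mathrm{dom}\beta| - |\mathrm{dom}\alpha|} a_\beta$, the condition $p_\alpha = 1_{\alpha = \pi}$ inverts (Möbius inversion on the poset $\hat{\mathcal S}_r$) to $a_\beta = \sum_{\alpha \geq \beta} p_\alpha = 1_{\beta \leq \pi}$, which is the claimed parametrization. Finally, to identify the optimal input as an (asymptotic) tensor product of Bell states: if $\psi_n = \bigotimes_{x} \Bell_{x,\pi(x)}$ is literally a product of normalized Bell states matching $x$ with $\pi(x)$, then $\langle \psi_n | \tilde T_\beta^{(n)} | \psi_n \rangle \to 1$ iff $\beta \leq \pi$ and $\to 0$ otherwise — a direct computation using that $\tilde T_\beta^{(n)}$ is the projection onto the Bell states indexed by $\mathrm{dom}(\beta)$. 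Plugging $a_\beta = 1_{\beta \leq \pi}$ into \eqref{formula:Z} and using $\sum_{\beta \leq \pi} R_\beta^{(k)} = k^{2r} Z_\pi^{(k)}$ (from the first display in the proof of Proposition~\ref{prop:Z-conv-comb}) gives $Z = Z_\pi^{(k)}$, and its entropy is $rH(C)$ by \eqref{eq:entropy-Z-alpha}.

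I expect the main obstacle to be the equality analysis rather than the inequality: one must argue carefully that $H(Z) = rH(C)$ forces all the weight onto a single full permutation, which requires both (i) the strict monotonicity argument eliminating partial permutations and (ii) the strict concavity of von Neumann entropy to eliminate genuine mixtures of distinct $Z_\pi^{(k)}$'s. A minor subtlety in (ii) is verifying that the $Z_\pi^{(k)}$ for distinct $\pi$ are actually distinct as matrices (so that a nontrivial mixture is genuinely not a single state) — this follows because $Z_\pi^{(k)}$ determines $\pi$ via the location of its "Bell-enhanced" tensor legs, i.e. the operator $B^{(k)}_{x,\pi(x)}$ contributes a rank-one bump that pins down the pairing. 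One should also confirm that the convex combination in \eqref{eq:decomposition-Z-alpha} is over \emph{density} matrices (each $Z_\alpha^{(k)}$ has unit trace, which is immediate from the definition since $C$ and $\hat I$ are normalized), so that the concavity inequality applies directly. Assembling these pieces, together with the explicit Bell-state verification, completes the proof.
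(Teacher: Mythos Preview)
Your proposal is correct and follows essentially the same route as the paper: apply concavity of the von Neumann entropy to the convex decomposition $Z=\sum_\alpha p_\alpha Z_\alpha^{(k)}$ from Proposition~\ref{prop:Z-conv-comb}, use the monotonicity \eqref{eq:entropy-Z-alpha} to identify full permutations as the minimizers, and invoke strict concavity together with the distinctness of the $Z_\alpha^{(k)}$ for uniqueness. You have in fact fleshed out several points the paper leaves implicit (the M\"obius inversion recovering $a_\beta = 1_{\beta\leq\pi}$, the verification that the $Z_\pi^{(k)}$ are pairwise distinct, and the explicit check that Bell-state inputs realize these parameters), all of which are correct and strengthen the argument.
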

\begin{proof}
Using the concavity of the von Neumann entropy \cite[11.3.5]{nielsen-chuang} and equation \eqref{eq:decomposition-Z-alpha}, one has
$$H(Z) \geq \sum_{\alpha \in \hat{\mathcal S}_r} p_\alpha H(Z_\alpha^{(k)}).$$
The conclusion follows from the fact that the terms with full $\alpha$ have the least entropy. The unicity of the minimizer comes from the strict concavity of the entropy and form the fact that the density matrices $Z_\alpha^{(k)}$ are different. The rest of the statements in the theorem are trivial.
\end{proof}

\section{High entropy outputs and GHZ inputs}\label{sec:GHZ}

In this section we discuss a class of input states which give maximally mixed outputs, from which no information can be extracted. Although such examples are not interesting for the purpose of communicating classical information, they have theoretical interest.  After stating the main result in the following proposition, we discuss the particular cases of GHZ states and generic multi-partite pure states.

\begin{proposition}\label{prop:bad-inputs}
Consider a family $\psi_n$ of normalized input states such that, for all $\beta \in \hat{\mathcal S}_r$, $\beta \neq \emptyset$,
$$\langle \psi_n | \tilde T_\beta^{(n)} | \psi_n\rangle = o(1).$$
Then, the output state is asymptotically maximally mixed, i.e.
$$\text{a.s.}, \qquad \lim_{n \to \infty} Z_n =  \frac{\mathrm{I}_{k^{2r}}}{k^{2r}}.$$
\end{proposition}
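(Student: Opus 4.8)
The plan is to invoke Theorem \ref{thm:main} directly. The hypothesis $\langle \psi_n | \tilde T_\beta^{(n)} | \psi_n\rangle = o(1)$ for all $\beta \neq \emptyset$, together with the automatic normalization $\langle \psi_n | \tilde T_\emptyset^{(n)} | \psi_n \rangle = \langle \psi_n | I | \psi_n \rangle = 1$, says precisely that the sequence $(\psi_n)_n$ is well-behaved in the sense of \eqref{eq:assumption-moments}, with limiting parameters $a_\emptyset = 1$ and $a_\beta = 0$ for every $\beta \neq \emptyset$. Thus Theorem \ref{thm:main} applies and gives, almost surely,
$$Z = k^{-2r} \sum_{\beta \in \hat{\mathcal S}_r} a_\beta R_\beta^{(k)} = k^{-2r} R_\emptyset^{(k)}.$$

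The only remaining point is to identify $R_\emptyset^{(k)}$. From the definition \eqref{operator:R}, the empty partial permutation has empty domain, so the tensor product over $x \in \mathrm{dom}(\emptyset)$ is empty and $R_\emptyset^{(k)} = I_{k^{2r}}$. Hence $Z = k^{-2r} I_{k^{2r}} = \mathrm{I}_{k^{2r}}/k^{2r}$, which is exactly the claimed almost-sure limit.

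There is essentially no obstacle here: the proposition is a corollary of Theorem \ref{thm:main}, the content being entirely in checking that the stated decay condition is the ``$a_\beta = 1_{\beta = \emptyset}$'' special case of well-behavedness. (One could alternatively observe that $a_\beta = 1_{\beta \leq \pi}$ fails for every full $\pi$ here since $a_\pi = 0$, so by Theorem \ref{thm:product-inputs} these inputs are as far as possible from the entropy-minimizers, consistent with the maximally mixed — hence maximal entropy — output.)
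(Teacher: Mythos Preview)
Your proof is correct and follows essentially the same approach as the paper's own argument: both invoke Theorem \ref{thm:main} to identify the almost-sure limit as $k^{-2r}R_\emptyset^{(k)}$, and then observe that $R_\emptyset^{(k)}$ is the identity. Your version is in fact slightly more explicit in verifying the well-behavedness hypothesis and in computing $R_\emptyset^{(k)}$ from its definition.
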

\begin{proof}
From Theorem \ref{thm:main}, the output state is asymptotically a mixture of the operators $R_\beta$, with coefficients given by the overlaps between the input vector $\psi_n$ and the operators $\tilde T_\beta^{(n)}$. Our assumption implies that all these coefficients will vanish asymptotically, except for the one corresponding to $\beta = \emptyset$. Hence, the almost sure limit of the output state $Z_n$ is
$$k^{-2r} R_\emptyset = \frac{\mathrm{I}_{k^{2r}}}{k^{2r}},$$
as announced.
\end{proof}

We now analyze the particular case of a GHZ input state \cite{ghz}. Such a state is defined by
\begin{equation}
(\mathbb C^n)^{2r} \ni \psi_\text{GHZ} = \frac{1}{\sqrt n} \sum_{i=1}^n \underbrace {e_i \otimes e_i \otimes \cdots \otimes e_i}_{2r \text{ factors}} .
\end{equation}

\begin{figure}[htbp]
\includegraphics{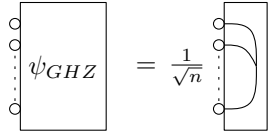}
\caption{Graphical representation of the GHZ input state.} 
\label{fig:GHZ-input}
\end{figure}

For a graphical picture of such a state, see Figure \ref{fig:GHZ-input}. Either by direct algebraic calculation of from graphical considerations, it is easy to see that one has, for all $\beta \in \hat{\mathcal S}_r$,
$$\langle \psi_\text{GHZ} | \tilde T_\beta^{(n)} | \psi_\text{GHZ} \rangle = n^{-\mathrm{dom}(\beta)}$$
The GHZ state satisfies thus the hypothesis of Proposition \ref{prop:bad-inputs} and we obtain the following corollary. 

\begin{corollary}
The output of the GHZ state $\psi_\text{GHZ}$ through a product of $r$ quantum channels and their conjugates
$$Z_n = [\Phi^{\otimes r} \otimes \bar{\Phi}^{\otimes r}](\psi_\text{GHZ}\psi_\text{GHZ}^*)$$
converges, almost surely, to a maximally mixed state.
\end{corollary}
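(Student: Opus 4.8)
The plan is to invoke Proposition \ref{prop:bad-inputs}, which reduces the statement to checking its single hypothesis: that $\langle \psi_\text{GHZ} | \tilde T_\beta^{(n)} | \psi_\text{GHZ} \rangle = o(1)$ for every nonempty partial permutation $\beta \in \hat{\mathcal S}_r$. By the text preceding the corollary, this overlap equals $n^{-|\mathrm{dom}(\beta)|}$, which indeed tends to $0$ as $n \to \infty$ whenever $|\mathrm{dom}(\beta)| \geq 1$, i.e. whenever $\beta \neq \emptyset$. So the only thing worth spelling out is why $\langle \psi_\text{GHZ} | \tilde T_\beta^{(n)} | \psi_\text{GHZ} \rangle = n^{-|\mathrm{dom}(\beta)|}$, and then the corollary follows immediately.

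For the overlap computation I would proceed algebraically. Recall from \eqref{operator_rescaled} that $\tilde T_\beta^{(n)} = \bigotimes_{x \in \mathrm{dom}(\beta)} \tilde B_{x,\beta(x)}^{(n)} \otimes I$ with $\tilde B_{x,y}^{(n)} = \tfrac1n \sum_{i,j} e_i e_j^* \otimes f_i f_j^*$ acting on $\mathbb C^n_{[x,T]} \otimes \mathbb C^n_{[\beta(x),B]}$, and that $\psi_\text{GHZ} = \tfrac{1}{\sqrt n}\sum_{\ell=1}^n e_\ell^{\otimes 2r}$, where the $2r$ tensor factors are indexed by $[1,T],\dots,[r,T],[1,B],\dots,[r,B]$. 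Writing $\psi_\text{GHZ}\psi_\text{GHZ}^* = \tfrac1n \sum_{\ell,m} (e_\ell e_m^*)^{\otimes 2r}$, applying $\tilde B_{x,\beta(x)}^{(n)}$ on the pair of slots $[x,T],[\beta(x),B]$ replaces the factor $e_\ell e_m^* \otimes e_\ell e_m^*$ (in those two slots) by $\tfrac1n \sum_{i,j} \langle e_i, e_\ell \rangle \langle e_m, e_j\rangle \, e_i e_j^* \otimes \langle f_i, f_\ell\rangle\langle f_m, f_j\rangle f_i f_j^*$. The orthonormality of the bases forces $i = \ell$ and $j = m$ from the ``top'' contraction and simultaneously $i = \ell$, $j = m$ from the ``bottom'' contraction — these are consistent — and produces a scalar factor $\tfrac1n$ (the off-diagonal dependence cancels, leaving the projection onto the GHZ-type vector unchanged in those slots). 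Doing this once per element of $\mathrm{dom}(\beta)$ and then taking the trace against the identity on the remaining slots gives $\langle \psi_\text{GHZ} | \tilde T_\beta^{(n)} | \psi_\text{GHZ}\rangle = n^{-|\mathrm{dom}(\beta)|}$; equivalently, one can read this off the diagram in Figure \ref{fig:GHZ-input} by counting closed loops. Since $\tilde T_\emptyset^{(n)} = I$ gives overlap $1$, only $\beta = \emptyset$ survives the limit.

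With the hypothesis of Proposition \ref{prop:bad-inputs} verified, the proposition gives directly that, almost surely, $Z_n = [\Phi^{\otimes r} \otimes \bar\Phi^{\otimes r}](\psi_\text{GHZ}\psi_\text{GHZ}^*) \to \mathrm{I}_{k^{2r}}/k^{2r}$; note that, as explained in Section \ref{sec:main}, passing to complementary channels does not affect the nonzero spectrum, so the statement about $\Phi,\bar\Phi$ themselves follows. I do not anticipate a real obstacle here: the entire content is the elementary overlap identity above, and the equivalence of norms on the fixed-dimensional space $M_{k^{2r}}(\mathbb C)$ means the almost-sure convergence is inherited unchanged from Theorem \ref{thm:main} via Proposition \ref{prop:bad-inputs}. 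The only place to be slightly careful is bookkeeping the slot indices so that each factor $\tilde B_{x,\beta(x)}^{(n)}$ really does contract a genuinely distinct pair of the $2r$ legs of the GHZ state — which holds precisely because $\beta$ is a partial permutation, so $x \mapsto x$ on the top legs and $x \mapsto \beta(x)$ on the bottom legs are both injective.
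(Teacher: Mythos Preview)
Your proposal is correct and follows exactly the paper's route: the paper simply notes (in the text immediately preceding the corollary) that $\langle \psi_\text{GHZ} | \tilde T_\beta^{(n)} | \psi_\text{GHZ}\rangle = n^{-|\mathrm{dom}(\beta)|}$ and then applies Proposition~\ref{prop:bad-inputs}, with no further argument. Your additional algebraic justification of the overlap identity and the remark about complementary channels are welcome elaborations but go beyond what the paper itself provides.
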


We study now random pure states $\psi_n$ distributed uniformly on the unit sphere of $(\C^n)^{\otimes 2r}$. 

 \begin{proposition} 
 For any partial permutation $\beta \neq \emptyset$, with overwhelming probability, a random pure input state $\psi_n$ satisfies
 $$\langle \psi_n | \tilde T_\beta^{(n)} | \psi_n\rangle = o(1)$$
 \end{proposition}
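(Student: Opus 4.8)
The plan is to show that the random variable $X_n = \langle \psi_n | \tilde T_\beta^{(n)} | \psi_n\rangle$ concentrates around its mean, which is small. First I would compute the expectation: since $\psi_n$ is Haar-distributed on the unit sphere of $(\C^n)^{\otimes 2r}$, averaging the rank-one projector $\psi_n\psi_n^*$ gives $\Ex[\psi_n\psi_n^*] = \mathrm{I}_{n^{2r}}/n^{2r}$, hence $\Ex X_n = \trace(\tilde T_\beta^{(n)})/n^{2r}$. Since $\tilde T_\beta^{(n)}$ is a projection of rank $n^{2r - |\mathrm{dom}(\beta)|}$ (each normalized Bell state $\tilde B_{x,\beta(x)}^{(n)}$ contributes rank $1$ on a pair of $\C^n$ factors, the identity contributes full rank on the remaining $2r - 2|\mathrm{dom}(\beta)|$ factors), we get $\Ex X_n = n^{-|\mathrm{dom}(\beta)|}$, which is $o(1)$ whenever $\beta \neq \emptyset$.

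Next I would establish concentration. The cleanest route is Lévy's lemma (measure concentration on the sphere): the function $\psi \mapsto \langle \psi | \tilde T_\beta^{(n)} | \psi \rangle$ on the unit sphere of $(\C^n)^{\otimes 2r} \cong \R^{2n^{2r}}$ is Lipschitz with a small constant. Indeed, for unit vectors $\psi, \phi$ one has $|\langle \psi | M | \psi\rangle - \langle \phi | M | \phi\rangle| \leq 2\|M\|_\infty \|\psi - \phi\|$, and here $\|\tilde T_\beta^{(n)}\|_\infty = 1$ since it is a projection; so the Lipschitz constant is at most $2$, uniformly in $n$. Lévy's lemma then gives $\P(|X_n - \Ex X_n| \geq t) \leq C \exp(-c\, n^{2r} t^2)$ for absolute constants $c, C > 0$. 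Taking, say, $t = n^{-|\mathrm{dom}(\beta)|/2}$ (which still dominates nothing problematic, or more simply any fixed $\epsilon > 0$), the right-hand side is exponentially small in $n^{2r}$, hence $X_n \leq \Ex X_n + t = o(1)$ with overwhelming probability; one can also phrase it so that the bound holds simultaneously for all (finitely many) $\beta \in \hat{\mathcal S}_r$ by a union bound.

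Alternatively, if one prefers an elementary second-moment argument avoiding Lévy's lemma, I would compute $\Ex X_n^2 = \Ex \langle \psi_n|\tilde T_\beta^{(n)}|\psi_n\rangle^2$ using the formula $\Ex[\psi_n^{\otimes 2}(\psi_n^*)^{\otimes 2}] = \frac{\mathrm{I} + F}{n^{2r}(n^{2r}+1)}$ where $F$ is the swap on two copies of $(\C^n)^{\otimes 2r}$; this yields $\Ex X_n^2 = \frac{\trace(\tilde T_\beta^{(n)})^2 + \trace(\tilde T_\beta^{(n)})}{n^{2r}(n^{2r}+1)}$, so that $\mathrm{Var}(X_n) = O(n^{-2r})$, which is more than enough for Borel–Cantelli or Chebyshev to give $X_n = o(1)$ almost surely (hence with overwhelming probability). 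The main obstacle, such as it is, is purely bookkeeping: correctly identifying the rank of $\tilde T_\beta^{(n)}$ and keeping track of which norm controls the Lipschitz constant; there is no real analytic difficulty, since the operator in question is a projection of small rank and concentration on the sphere is standard.
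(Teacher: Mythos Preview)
Your approach is essentially identical to the paper's: compute the expectation of $\langle \psi_n | \tilde T_\beta^{(n)} | \psi_n\rangle$ and then apply L\'evy's lemma, using that $\tilde T_\beta^{(n)}$ is a projection so the Lipschitz constant is at most $2$. One minor arithmetic slip: the rank of $\tilde T_\beta^{(n)}$ is $n^{2r-2|\mathrm{dom}(\beta)|}$ (your own reasoning about the tensor factors gives this), so $\Ex X_n = n^{-2|\mathrm{dom}(\beta)|}$ rather than $n^{-|\mathrm{dom}(\beta)|}$, but this does not affect the argument.
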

 \begin{proof}
 First, note that 
 $$\Ex \langle \psi_n | \tilde T_\beta^{(n)} | \psi_n\rangle = \trace (\tilde T_\beta^{(n)} \Ex \psi_n \psi_n^*) = n^{-2r} \trace(\tilde T_\beta^{(n)}) = n^{-2\mathrm{dom}(\beta)}$$
 We shall use (see \cite{msc})
 \begin{lemma}[Levy's lemma]
 Let $f : S^{d-1} \to \R$ be a function defined on the unit sphere of $\R^d$ with Lipschitz constant $L$. Then 
 $$\P( |f-\Ex f| > \epsilon ) \leq \exp(-Cd\epsilon^2/L^2)$$
 where the expectation is taken with respect to the uniform measure on $S^{d-1}$ and $C$ is a constant.
 \end{lemma}
 \noindent with $d=2n^{2r}$ and $f(\psi) = \langle \psi_n | \tilde T_\beta^{(n)} | \psi_n\rangle$. Since $\tilde T_\beta^{(n)} $ is a projector, the function $f$ has Lipschitz constant bounded by 2. Putting $\epsilon = n^{-r+\delta}$, for some $\delta >0$, we obtain the announced result.
 \end{proof}

When contrasting the above results with the one in Theorem \ref{thm:product-inputs}, one concludes that the entanglement present in GHZ or generic states is not suitable for producing low-entropy outputs. The structure of the entanglement in the states from Theorem \ref{thm:product-inputs} seems to be essential in obtaining such low-entropy states.

\section*{acknowledgment}
Both authors thank Piotr \'Sniady for useful discussions and for his suggestion on how to deal with highly degenerate matrices. 
I.~N.~ would like to thank P\'eter Vrana for interesting discussions and the Technische Universit\"{a}t M\"{u}nchen for its hospitality during the summer months of 2012 when this project was initiated. The research of M.~F.~ was financially supported by the CHIST-ERA/BMBF project CQC. I.~N.~ acknowledges financial support from the ANR project OSvsQPI 2011 BS01 008 01.

\appendix\section{Bounds for generalized traces of matrices acting on tensor products}\label{sec:trace-bounds}

In this appendix we derive bounds for generalized traces of tensors in terms of their Schatten norms. These results are in the spirit of those obtained by Mingo and Speicher in \cite{mingo-speicher}, with two notable differences. We consider general tensors, whereas in \cite{mingo-speicher}, the authors investigate generalized traces of matrices. On the other hand, the type of traces we look at are less general than the ones in \cite{mingo-speicher}. 

In the current paper, we shall only use the $L^2$ incarnation of the results presented in this appendix. However, we think that the results are interesting on their own and might prove to be useful in other circumstances.  

Consider a set of $k$ matrices $A_1, \ldots, A_k$ acting on $(\mathbb C^n)^{\otimes r}$. Graphically, these matrices are represented by boxes with $r$ legs on each side. Let $\sigma \in \mathcal S_{kr}$ be a permutation that will be used to contract the $2kr$ legs of the boxes $A_i$. We shall be implicitly using the bijection $\{1, \ldots, kr\} = \{1, \ldots, k\} \times \{1, \ldots, r\}$ in such a way that elements in $\{1, \ldots, kr\}$ shall be denoted by pairs $(i,x)$, with $i \in \{1, \ldots, k\}$ and $x \in \{1, \ldots, r\}$. We call a \emph{generalized trace} of these matrices the quantity (see Figure \ref{fig:generalized-moment} for a graphical representation):
\begin{equation}\label{eq:def-generalized-moment}
\mathrm{Tr}_\sigma(A_1, \ldots, A_k)=\mathrm{Tr}\left[\left(\bigotimes_{i=1}^k A_i \right)P^\otimes_\sigma \right]
\end{equation}

\begin{figure}[htbp]
\includegraphics{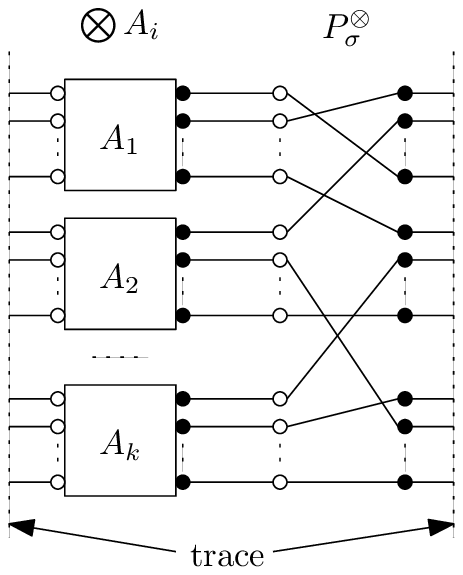}
\caption{Diagram for generalized trace of the matrices $A_1, \ldots, A_k$ connected with the permutation $\sigma$. The dotted vertical lines on each side of the diagram are identified and correspond to the trace in equation \eqref{eq:def-generalized-moment}.} 
\label{fig:generalized-moment}
\end{figure}

As a working example, let us consider the following example of a generalized trace:
\begin{equation}\label{eq:bound-example}
\mathrm{Tr} \left\{ [\mathrm{Tr} \otimes \mathrm{id}](A_1) \cdot [\mathrm{id} \otimes \mathrm{Tr} ](A_2)\right\}
\end{equation}
In Figure \ref{fig:bound-example-diagram}, we represent the trace using the graphical formalism from Section \ref{sec:graphical-calculus}. The same calculus can be represented as a trace of the tensor product of the matrices against a tensor permutation matrix, see Figure \ref{fig:bound-example-moment}. The permutation appearing in this example is simply the transposition $((1,2),(2,1))$, swapping the second leg of $A_1$ with the first leg of $A_2$.
\begin{figure}[htbp]
\subfigure[]{\includegraphics{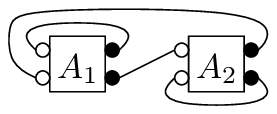}\label{fig:bound-example-diagram}}\qquad\qquad
\subfigure[]{\includegraphics{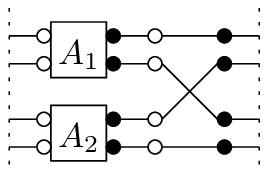}\label{fig:bound-example-moment}}\qquad\qquad
\subfigure[]{\includegraphics{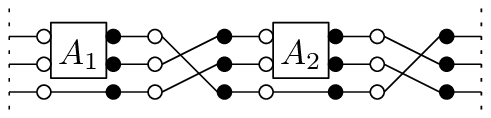}\label{fig:bound-example-input-output}}
\caption{The same diagram, represented in the usual graphical formalism (left), as a generalized trace formalism (center) and in the input-output formalism (right). The two dotted lines in each picture are identified and correspond to traces. The input-output presentation on the right has overhead $s=1$.} 
\label{fig:bound-example}
\end{figure}

Before we state and prove the main result, let us introduce two subsets $\Gamma$ and $\Theta$ of the permutation group $\mathcal S_{kr}$ which correspond to two special classes of generalized traces. 

First, the subset $\Gamma$ is defined such that the generalized trace \eqref{eq:def-generalized-moment} can be written 
as trace of the  product of $k$ factors of the type $A_i P^\otimes_{\rho_i}$ for $\rho_i \in \S_r$, where the order does not matter. 
Graphically, $\Gamma$ corresponds to the generalized trace which can be rearranged into 
a ``stream'' of matrices $A_i$'s with wires connecting the neighboring matrices.
So, this may be called ``input-output representation''.
Formally, for a permutation $\alpha \in \mathcal{S}_k$, let
$$\mathcal{S}_{kr} \supseteq [ \alpha ] = \{\beta \in \mathcal{S}_{kr} \, | \, \forall i \in \{1, \ldots, k\} , \forall x \in \{1, \ldots, r\}, \exists y \in \{1, \ldots, r\} \text{ s.t. } \beta(i,x)=(\alpha(i), y)\}$$
be the set of permutations of $\{1, \ldots, kr\}$ which preserve the blocks of size $r$ and act like $\alpha$ globally on these blocks. Then, 
$$\Gamma = \bigcup_{\substack{ \gamma \in \mathcal S_k \\ \#\gamma=1}} [\gamma]$$
Equivalently, recall that the wreath product $\mathcal S_r \wr \mathcal S_k$ has elements $(\beta;\alpha_1, \ldots, \alpha_k)$ with $\beta \in \mathcal S_k$ and $\alpha_i \in \mathcal S_r$. One has then
$$\Gamma = \{(\beta;\alpha_1, \ldots, \alpha_k) \in \mathcal S_r \wr \mathcal S_k, \, | \,  \#\beta=1\}$$

Secondly, when $k$ is even 
we define the subset $\Theta$ to be such that the generalized trace \eqref{eq:def-generalized-moment} 
can be written as trace of  product of ``rotated two columns'' where each column 
is a tensor product of $k/2$ matrices $A_i$. 
More precisely, 
the two-column structure of $\Theta$ is encoded into an equi-partition $\{1, \ldots, k\} = E \sqcup F$, 
with $|E| = |F| = k/2$ and we ask for a permutation $\theta \in \Theta$ to satisfy
\begin{align*}
\forall i \in E,\, \forall x \in \{1, \ldots, r\}, \quad \theta((i,x)) &= (j, y), \quad \text{for some } j \in F \text{ and } y \in \{1, \ldots, r\}\\
\forall j \in F,\, \forall x \in \{1, \ldots, r\}, \quad \theta((j,x)) &= (i, y), \quad \text{for some } i \in E \text{ and } y \in \{1, \ldots, r\}
\end{align*}

\begin{theorem}\label{thm:bound-generalized-moment}
For any permutation $\sigma \in \mathcal S_{kr}$ and any matrices $A_1, \ldots, A_k \in M_{n^r}(\mathbb C)$, the following bounds hold. 
\begin{enumerate}[(a)]
\item $L^1$-bound:
\begin{align}
\label{eq:bound-generalized-moment-1}\left\vert \mathrm{Tr}_\sigma(A_1, \ldots, A_k) \right \vert &\leq  \prod_{i=1}^{k} \|A_i\|_1 
\end{align} 
\item $L^\infty$-bound:
\begin{align} 
\label{eq:bound-generalized-moment-infty} \left\vert \mathrm{Tr}_\sigma(A_1, \ldots, A_k) \right \vert &\leq n^{r+\mathrm{dist}(\sigma, \Gamma)} \prod_{i=1}^{k} \|A_i\|_\infty
\end{align}
\item $L^2$-bound: if $k$ is even, then 
\begin{align}
\label{eq:bound-generalized-moment-2} \left\vert \mathrm{Tr}_\sigma(A_1, \ldots, A_k) \right \vert &\leq n^{\mathrm{dist}(\sigma, \Theta)} \prod_{i=1}^{k} \|A_i\|_2
\end{align}
\end{enumerate} 
\end{theorem}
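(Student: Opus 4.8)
The plan is to prove the three inequalities separately. Part (a) is formal: since $P^\otimes_\sigma$ is a unitary operator on $(\C^n)^{\otimes kr}$, the trace norm is multiplicative under tensor products, and $|\mathrm{Tr}(X)|\le\|X\|_1$ with $\|XU\|_1=\|X\|_1$ for unitary $U$, one gets immediately
$$\left|\mathrm{Tr}_\sigma(A_1,\ldots,A_k)\right| = \left|\mathrm{Tr}\left[\Bigl(\bigotimes\nolimits_{i}A_i\Bigr)P^\otimes_\sigma\right]\right| \le \left\|\bigotimes\nolimits_{i}A_i\right\|_1 = \prod_{i=1}^{k}\|A_i\|_1,$$
with no hypothesis on $\sigma$. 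For (b) and (c) the strategy is an induction on the Cayley distance $\mathrm{dist}(\sigma,\Gamma)$, respectively $\mathrm{dist}(\sigma,\Theta)$: at the bottom of the induction the diagram has a favourable shape and the bound follows from an elementary operator inequality, while each unit of distance costs at most one extra factor of $n$.

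The base cases are the two special shapes already singled out in the text. If $\sigma\in\Gamma$ then $\sigma$ is a wreath-product element whose base permutation is a single $k$-cycle, which is precisely the statement that the generalized trace rewrites as the trace of one cyclic product $A_{i_1}P^\otimes_{\rho_1}\cdots A_{i_k}P^\otimes_{\rho_k}$ of operators on $(\C^n)^{\otimes r}$, with $\rho_j\in\mathcal S_r$; the $P^\otimes_{\rho_j}$ are unitary and the ambient space has dimension $n^r$, so $|\mathrm{Tr}(X)|\le(\dim)\|X\|_\infty$ yields $|\mathrm{Tr}_\sigma(A_\bullet)|\le n^r\prod_i\|A_i\|_\infty$, the $\mathrm{dist}(\sigma,\Gamma)=0$ case of (b). If $\sigma\in\Theta$ with attached equipartition $\{1,\dots,k\}=E\sqcup F$, the two-column structure lets one write $\mathrm{Tr}_\sigma(A_\bullet)=\mathrm{Tr}(XY)$ for matrices $X,Y$ obtained from $\bigotimes_{i\in E}A_i$, resp.\ $\bigotimes_{j\in F}A_j$, by tensor permutations, which are unitary and hence isometric for the Hilbert--Schmidt norm; Cauchy--Schwarz gives $|\mathrm{Tr}(XY)|\le\|X\|_2\|Y\|_2=\prod_{i=1}^{k}\|A_i\|_2$, the $\mathrm{dist}(\sigma,\Theta)=0$ case of (c).

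The inductive step is the heart of the argument and, I expect, the main obstacle. Given $\sigma$ with $\mathrm{dist}(\sigma,\Gamma)=m\ge 1$ (resp.\ $\mathrm{dist}(\sigma,\Theta)=m\ge 1$), pick $\theta$ in $\Gamma$ (resp.\ $\Theta$) realising the distance, write $\sigma^{-1}\theta$ as a reduced word of transpositions, and let $\tau$ be its first letter, so that $\mathrm{dist}(\sigma\tau,\Gamma)=m-1$ (resp.\ $\mathrm{dist}(\sigma\tau,\Theta)=m-1$); here one uses that right-multiplication by a transposition changes $\mathrm{dist}$ by exactly one, which follows from the length identity $|\rho\,(a\,b)|=|\rho|\mp1$ according as $a,b$ do or do not lie in a common cycle of $\rho$ (\cite[Lemma 23.10]{nica-speicher}). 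One must then show that reconnecting two wires of the diagram according to $\tau$ can always be accommodated by introducing exactly one auxiliary $\C^n$-line: graphically $\tau$ either acts inside a strand already present in the normal form built for $\sigma\tau$ --- costing nothing, the new diagram being of the same type --- or it splits a strand, which forces one additional $\C^n$-index to be summed and contributes the factor $n$ (in the ``input--output'', resp.\ ``two-column'', picture this is exactly the statement that $\tau$ raises the ``overhead'' by one). Iterating $m$ times from the base case produces the exponents $n^{r+m}$ in (b) and $n^m$ in (c). The delicate points, which I would handle using the wreath-product description of $\Gamma$ and the two-column description of $\Theta$ together with the length identity above, are that a geodesic transposition can always be chosen of the local ``act-within'' or ``split'' type relative to the normal form at hand, and, in the $\Theta$ case, that the equipartition selected by the nearest $\theta$ stays consistent along the induction.

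For the $L^2$ bound (c) --- the only one needed in the body of the paper --- there is an alternative to the overhead normal form: write $\sigma=\theta\rho$ with $\theta\in\Theta$ and $|\rho|=\mathrm{dist}(\sigma,\Theta)=:m$, expand each of the $m$ transpositions composing $\rho$ as a swap $\sum_{a,b}E_{ab}\otimes E_{ba}$, and apply Cauchy--Schwarz to the resulting sum of scalars. Each summand is a two-column diagram, bounded by $\prod_i\|A_i\|_2$ as in the base case once the rank-one insertions are folded into the two columns, while the number of summands contributes the factor $n^{2m}$; taking square roots gives $|\mathrm{Tr}_\sigma(A_\bullet)|\le n^m\prod_i\|A_i\|_2$. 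Either way the combinatorial core --- a single transposition costs a single power of $n$ --- is the same, and is the point demanding the most care.
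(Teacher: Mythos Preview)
Your approach matches the paper's: reduce (b) and (c) to constructing an \emph{input-output} (resp.\ \emph{two-column}) presentation of $\mathrm{Tr}_\sigma(A_\bullet)$ with overhead $s=\mathrm{dist}(\sigma,\Gamma)$ (resp.\ $\mathrm{dist}(\sigma,\Theta)$), then bound via $|\mathrm{Tr}(T)|\le n^{r+s}\prod_i\|A_i\|_\infty$ (resp.\ Cauchy--Schwarz giving $|\mathrm{Tr}(S)|\le n^{s}\prod_i\|A_i\|_2$). The inductive step is indeed the core, and the paper isolates it as a surgery lemma: if $\sigma$ admits a presentation with overhead $s$, then $\sigma\xi$ admits one with overhead $s+1$, for \emph{any} transposition $\xi$ --- one simply reroutes one of the two swapped legs through a fresh auxiliary $\C^n$-strand (modifying only the two tensor-permutation factors $\rho'_{k_1},\rho'_{k_2}$ adjacent to the affected boxes). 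Because this works for arbitrary $\xi$, the concerns you flag as delicate evaporate: there is no need to choose geodesic transpositions of a special ``act-within'' or ``split'' type, nor to maintain consistency of the equipartition along the induction; one just fixes any geodesic decomposition $\sigma=\gamma\xi_1\cdots\xi_s$ (resp.\ $\sigma=\theta\xi_1\cdots\xi_s$) once, starts from the zero-overhead presentation of $\gamma\in\Gamma$ (resp.\ $\theta\in\Theta$), and applies the lemma $s$ times.

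Your alternative route for (c) --- expanding each of the $m$ transpositions as $\sum_{a,b}E_{ab}\otimes E_{ba}$ and bounding termwise --- is not justified as written: a naive count produces $n^{2m}$ summands, and folding rank-one insertions $E_{ab}$ into the columns does not give each summand the bound $\prod_i\|A_i\|_2$, so the ``taking square roots'' step has no clear meaning. Drop that line and keep the presentation argument, which is what the paper does.
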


\begin{proof}
The $L^1$ case follows trivially from the fact that the Schatten 1-norm (and all the other $p$-norms, for that matter) are unitarily invariant and multiplicative under tensor products. See Figure \ref{fig:generalized-moment}. Note that equality can be achieved in this case by taking 
$A_i = \bigotimes_{j=1}^r xx^*$ for the same unit vector $x \in \C^n$.

Let us now treat the $L^\infty$ bound, which is conceptually more interesting, and discuss the $L^2$ case last. To this end, we introduce the important definition of an \emph{input-output presentation} of the generalized trace given by $(A_1, \ldots, A_k;\sigma)$: it is an operator $T:(\mathbb C^n)^{\otimes (r+s)} \to (\mathbb C^n)^{\otimes (r+s)}$ of the form
\begin{equation}\label{eq:presentation_T}
T = \prod_{i=1}^k \left( A_{\tau(i)} \otimes \mathrm{I}_{n^s} \right) P^\otimes_{\rho_i}, 
\end{equation}
where $\tau \in \mathcal{S}_k$ is a permutation encoding the order of the matrices $A_i$ in the expansion, $P^\otimes_{\rho_i}$ are tensor permutation matrices acting on $(\mathbb C^n)^{\otimes (r+s)}$, with $\rho_i \in \mathcal{S}_{r+s}$. Figure \ref{fig:bound-example-input-output} contains an input-output presentation of the generalized trace appearing in \eqref{eq:bound-example}. The integer $s \geq 0$ is called the \emph{overhead} of the presentation. We ask for such a presentation to encode the generalized trace 
$$\mathrm{Tr}(T) = \mathrm{Tr}\left[\left(\bigotimes_{i=1}^k A_i \right)P^\otimes_\sigma \right]$$
Such representations have the advantage that they readily give bounds
\begin{align}
|\mathrm{Tr}(T) | &\leq  \|T\|_1 \leq n^{r+s} \|T\|_\infty \notag \\
&\leq n^{r+s} \prod_{i=1}^k \|A_{\tau(i)} \otimes \mathrm{I}_{n^s}\|_\infty \|P^\otimes_{\rho_i}\|_\infty \notag \\
&\leq n^{r+s}  \prod_{i=1}^{k} \|A_i\|_\infty \label{eq:moment-bound-presentation-2} 
\end{align}

We shall prove, by induction on the number $\mathrm{dist}(\sigma, \Gamma)$, that one can find an input-output presentation of a generalized trace with an overhead $s=\mathrm{dist}(\sigma, \Gamma)$. This is the key idea of the proof.

Let us first consider the case when $\sigma \in \Gamma$. This means that, up to permutations $\rho_i \in \mathcal{S}_r$  of the legs of each individual box $A_i$, the generalized trace is a trace of the product of all the matrices $A_i$, in some specific order given by a permutation $\tau$. In other words, one has
$$ \mathrm{Tr}_\sigma(A_1, \ldots, A_k) = \prod_{i=1}^k  A_{\tau(i)} P^\otimes_{\rho_i}, $$
and this is an input-output presentation of $\sigma$, without any overhead. This proves the initialization step of the induction, $\mathrm{dist}(\sigma, \Gamma)=0$. 

The inductive step of our claim is a consequence of the following lemma. 

\begin{lemma}\label{lem:surgery}
If a permutation $\sigma \in \mathcal{S}_{kr}$ admits an input-output presentation with overhead $s$ and $\xi$ is a transposition, then $\sigma \xi$ admits an input-output presentation with overhead $s+1$. 
\end{lemma}
\begin{proof}
Let $\xi = ((j_1,y_1),(j_2,y_2))$ be the transposition in the statement. If $j_1=j_2$, there is nothing to show, since the transposition $\xi$ can be absorbed in the tensor permutation matrix $P^\otimes_{\tau^{-1}(j_1)}$ in the presentation of $\sigma$. This way, one obtains a presentation for $\sigma \xi$, with overhead $s$ (hence it is possible with overhead $s+1$).

Suppose now $j_1\neq j_2$ and let $T$ be a presentation for $\sigma$, as it is defined in \eqref{eq:presentation_T}. Let also $k_1 = \tau^{-1}(j_1)$ and $k_2 = \tau^{-1}(j_2)$. The presentation $T$ can not be used directly to produce an input-output presentation for $\sigma \xi$, since one of the legs of the matrix $A_{k_1}$ needs to be connected to a different permutation and this is not allowed in the definition of an input-output presentation, see Figure \ref{fig:presentation-sigma-xi-bad}.

We shall modify the presentation $T$ for $\sigma$, by adding a unit of overhead, into a presentation $T'$ for $\sigma \xi$, as follows:
$$T' = \prod_{i=1}^k \left( A_{\tau'(i)} \otimes \mathrm{I}_{n^{s+1}} \right) P^\otimes_{\rho'_i},$$
with 
\begin{itemize}
\item $\tau' = \tau$ : the order of the matrices $A_i$ does not change;
\item For all $i \notin \{k_1,k_2\}$, $\rho'_i(x) = \rho(x) \, \forall x \in [r+s]$ and $\rho'_i(r+s+1) = r+s+1$ : for all boxes not affected by $\xi$, the tensor permutation matrices are the same;
\item $\rho'_{k_1}(y_1) = r+s+1$, $\rho'_{k_1}(r+s+1) = \rho_{k_1}(y_1)$ and  $\rho'_{k_1}(x) = \rho_{k_1}(x)$ for all $x \notin \{y_1,r+s+1\}$;
\item $\rho'_{k_2}(y_2) = r+s+1$, $\rho'_{k_2}(r+s+1) = \rho_{k_2}(y_2)$ and  $\rho'_{k_2}(x) = \rho_{k_2}(x)$ for all $x \notin \{y_2,r+s+1\}$.
\end{itemize} 

In other words, the extra copy of the space $\mathbb C^n$ is used to implement the transposition $\xi$ in a way compatible with the input-output structure of the presentation $T$, see Figure \ref{fig:presentation-sigma-xi} for a graphical representation of this procedure.

\begin{figure}[htbp]
\subfigure[]{\includegraphics{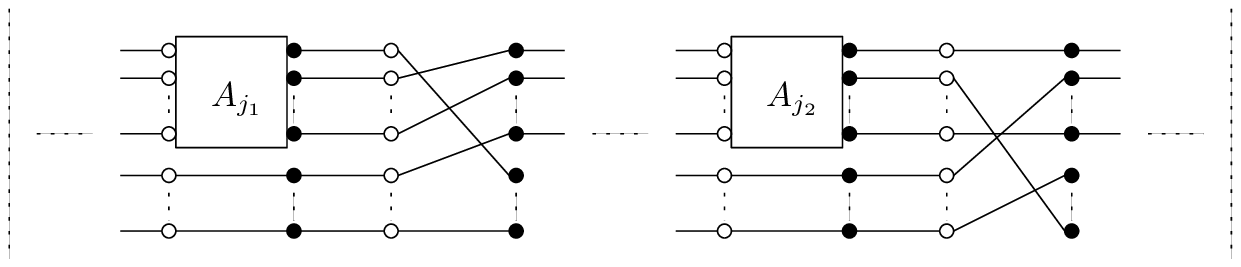}}\qquad\qquad
\subfigure[]{\label{fig:presentation-sigma-xi-bad} \includegraphics{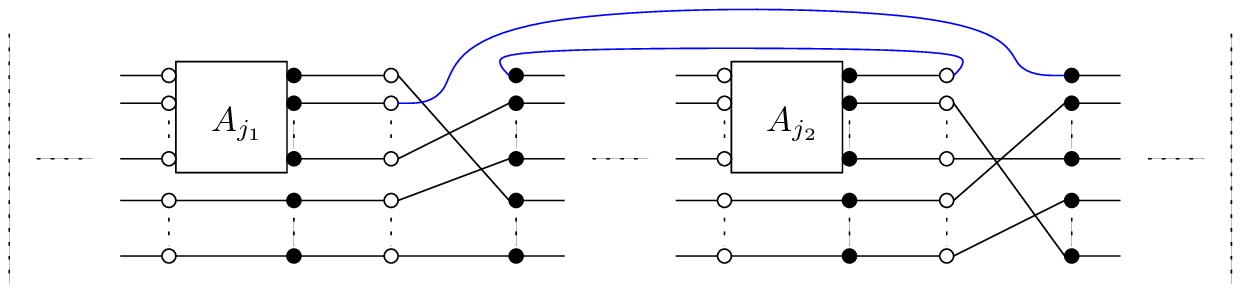}}\qquad\qquad
\subfigure[]{\includegraphics{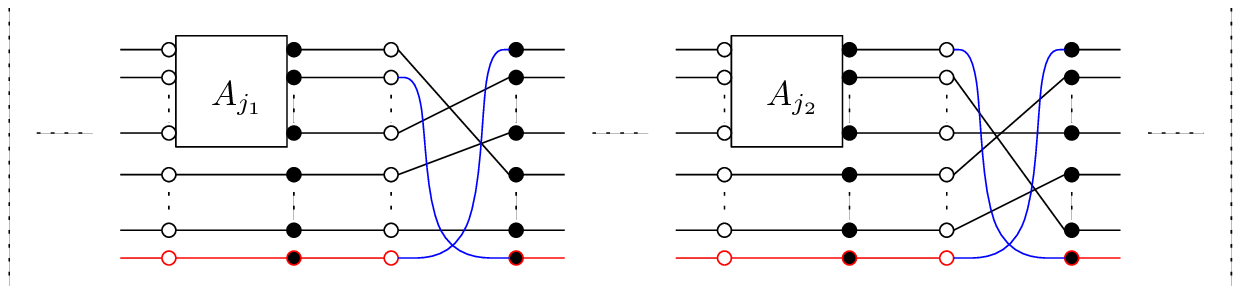}}
\caption{Input-output presentations for generalized traces. A presentation for $\sigma$ is depicted in the top figure. The center figure corresponds to $\sigma\xi$, but it is not an input-output presentation for this permutation, since it contains wires which do not connect consecutive blocks. The bottom figure depicts the presentation $T'$, obtained by adding an extra copy of $\mathbb C^n$ and implementing the transposition $\xi$ through this space.} 
\label{fig:presentation-sigma-xi}
\end{figure}

\end{proof}

Using the above lemma, one can produce an input-output presentation of a permutation inductively from a decomposition
$$\sigma = \gamma \xi_1 \xi_2 \cdots \xi_s,$$
where $\gamma \in \Gamma$, $s= \mathrm{dist}(\sigma, \Gamma)$ and $\xi_i$ are transpositions. Start with a zero-overhead presentation of $\gamma$ and add the transpositions $\xi_i$, each adding a unit of overhead to the presentation. At the end, one obtains a presentation for $\sigma$ with overhead $\mathrm{dist}(\sigma, \Gamma)$, and the $L^\infty$ bound  \eqref{eq:bound-generalized-moment-infty} follows.

(c) We now move to the $L^2$ bound \eqref{eq:bound-generalized-moment-2}. The proof strategy is exactly the same as in the $L^\infty$ case and we sketch only the differences. We introduce \emph{two-column presentations} as operators  $S:(\mathbb C^n)^{\otimes (kr/2+s)} \to (\mathbb C^n)^{\otimes (kr/2+s)}$ of the form
\begin{equation}\label{eq:presentation_S}
S = \prod_{j=1}^2 \left[ \left( \bigotimes_{i=1}^{k/2}A_{\tau_j(i)} \otimes \mathrm{I}_{n^s} \right) P^\otimes_{\rho_j}\right],
\end{equation}
where $\tau_1, \tau_2 : [k/2] \rightarrow [k]$  encode the order of the matrices $A_i$ in the presentation, and $P^\otimes_{\rho_j}$ are tensor permutation matrices acting on $(\mathbb C^n)^{\otimes (kr/2+s)}$, with $\rho_j \in \mathcal{S}_{kr/2+s}$. 
Figure \ref{fig:two-column-presentation} contains a two-column presentation of some generalized trace. 
Again, the integer $s \geq 0$ is called the \emph{overhead} of the presentation. As before, we ask that a presentation encodes the generalized trace, for any choice of $A_i$:
$$\mathrm{Tr}(S) = \mathrm{Tr}\left[\left(\bigotimes_{i=1}^k A_i \right)P^\otimes_\sigma \right].$$

\begin{figure}[htbp]
\includegraphics{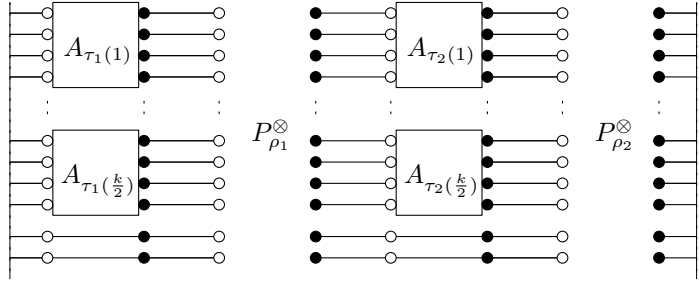}
\caption{A two-column presentation with overhead $s=2$ of a generalized trace, with $r=4$.} 
\label{fig:two-column-presentation}
\end{figure}

Using the Cauchy-Schwarz inequality, one can obtain the following bound:
\begin{align}
|\mathrm{Tr}(S) | &\leq   \left\|  \left( \bigotimes_{i=1}^{k/2}A_{\tau_1(i)} \otimes \mathrm{I}_{n^s} \right) P^\otimes_{\rho_1}\ \right\|_2 \left\|  \left( \bigotimes_{i=1}^{k/2}A_{\tau_2(i)} \otimes \mathrm{I}_{n^s} \right) P^\otimes_{\rho_2}\ \right\|_2 \notag \\
&\leq n^{s/2} \prod_{i=1}^{k/2} \|A_{\tau_1(i)} \|_2  \, n^{s/2} \prod_{i=1}^{k/2} \|A_{\tau_2(i)} \|_2\notag \\
&\leq n^{s}  \prod_{i=1}^{k} \|A_i\|_2. 
\end{align}

Since permutations $\sigma \in \Theta$ admit two-column presentations with overhead $s=0$, the bound \eqref{eq:bound-generalized-moment-2} holds in this case. For general $\sigma$, one proceeds as in the $L^\infty$ case, using Lemma \ref{lem:surgery} for each transposition $\xi_i$ appearing in the decomposition 
$$\sigma = \theta \xi_1 \xi_2 \cdots \xi_s,$$
where $\theta \in \Theta$.
\end{proof}

\begin{remark}
The case $r=1$ of the $L^\infty$ bound \eqref{eq:bound-generalized-moment-infty} is a trivial corollary of the result in \cite{mingo-speicher}, since the forest associated to the graph contains $\#\sigma$ trivial trees. 
\end{remark}

\end{document}